\theoremstyle{plain}
\newtheorem{theorem}{Theorem}
\newtheorem{lemma}[theorem]{Lemma}
\newtheorem{corollary}[theorem]{Corollary}
\newtheorem{proposition}[theorem]{Proposition}
\newtheorem{problem}[theorem]{Problem}
\begin{document}
\title{Plane Multigraphs with One-Bend and\\ Circular-Arc Edges of a Fixed Angle\thanks{Research on this paper was partially supported by the NSF award DMS~2154347.}}
\author{Csaba D. T\'oth\thanks{California State University Northridge, Los Angeles, CA, USA; and 
Tufts University, Medford, MA, USA. Email: \texttt{csaba.toth@csun.edu}}}
\date{}
%

%
\maketitle

\begin{abstract}
For an angle $\alpha\in (0,\pi)$, we consider plane graphs and multigraphs in which the edges are either (i) one-bend polylines with an angle $\alpha$ between the two edge segments, or (ii) circular arcs of central angle $2(\pi-\alpha)$. We derive upper and lower bounds on the maximum density of such graphs in terms of $\alpha$. As an application, we improve upon bounds for the number of edges in $\alpha AC_1^=$ graphs (i.e., graphs that can be drawn in the plane with one-bend edges such that any two crossing edges meet at angle $\alpha$). This is the first improvement on the size of $\alpha AC_1^=$ graphs in over a decade.
%
\end{abstract}

\section{Introduction}
\label{sec:intro}

According to a well-known corollary of Euler's formula, an edge-maximal planar straight-line graph on $n\geq 3$ vertices has at most $3n-6$ edges, which is attained on any set of $n\geq 3$ points in general position with a triangular convex hull; and at least $2n-3$ edges, which is attained for $n$ points in convex position. Specifically, on a given set $P$ of $n$ points in the plane in general position, $h\geq 3$ of which are on the convex hull of $P$, the maximum number of edges of a planar straight-line graph is $M(P)=3n-h-3$. This paper explores analogous questions for graphs where the edges are one-bend polylines or circular arcs with a fixed angle. Importantly, there may be multiple edges between a pair of vertices in these drawing styles, and \emph{multigraphs} become relevant.

\begin{wrapfigure}[10]{r}{0.35\textwidth}
  \vspace{-1.5\baselineskip}
  \begin{center}
   \includegraphics[width=0.3\textwidth]{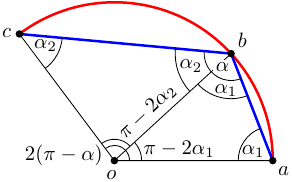}
  \end{center}
    \vspace{-1\baselineskip}
\caption{The relation between an $\alpha$-arc edge and an $\alpha$-bend edges.} \label{fig:0}
\end{wrapfigure}
%
%
For an angle $\alpha\in (0,\pi)$, an \textbf{$\alpha$-bend edge} between vertices $a$ and $c$ is a polyline $(a,b,c)$ with one bend at $b$ such that the interior angle of the triangle $\Delta (abc)$ at $b$ is $\alpha$; and an \textbf{$\alpha$-arc edge} is a circular arc between $a$ and $c$ with central angle $2(\pi-\alpha)$. By the Inscribed Angle Theorem, if $e$ is an $\alpha$-arc edge between $a$ and $c$, and $b$ is any interior point of the arc $e$, then the polyline $(a,b,c)$ is an $\alpha$-edge; see Fig.~\ref{fig:0} for an example.

A simple graph embedded\footnote{An \emph{embeddng} of a graph into a surface is a continuous injective map of the 1-dimensional simplicial complex formed by the vertices and edges of the graph.} in the plane such that every edge is $\alpha$-bend (resp., $\alpha$-arc) is an \textbf{$\alpha$-bend graph} (resp., \textbf{$\alpha$-arc graph}).
Similarly, a multigraph embedded in the plane such that every edge is $\alpha$-bend (resp., $\alpha$-arc) is an \textbf{$\alpha$-bend multigraph} (resp., \textbf{$\alpha$-arc multigraph}). See Fig.~\ref{fig:1} for examples.
For a finite set $P\subset \mathbb{R}^2$, denote by $M_a(P,\alpha)$ and $M_b(P,\alpha)$, resp., the maximum number of edges in an $\alpha$-arc graph and an $\alpha$-bend graph. Similarly, the
maximum number of edges in an $\alpha$-arc and $\alpha$-bend multigraph is denoted by $M_a^\|(P,\alpha)$ and $M_b^\|(P,\alpha)$, respectively.
It may be hard to compute $M_a(P,\alpha)$, $M_b(P,\alpha)$, $M_a^\|(P,\alpha)$, and $M_b^\|(P,\alpha)$, for a given point set $P$ and a given angle $\alpha\in (0,\pi)$; see Problem~\ref{prob:1} in Section~\ref{sec:con}.

\begin{figure}[htbp]
\centering
\includegraphics[width=0.9\textwidth]{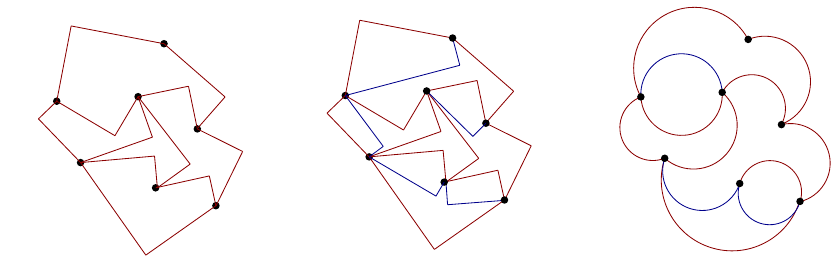}
\caption{A $\frac{\pi}{2}$-bend graph,
a $\frac{\pi}{2}$-bend multigraph,
and a $\frac{\pi}{2}$-arc muligraph} \label{fig:1}
\end{figure}

For $n\in \mathbb{N}$, let $M_a(n,\alpha) = \sup_{|P|=n} M_a(P,\alpha)$,
and define $M_b(n,\alpha)$, $M_a^\|(n,\alpha)$ and $M_b^\|(n,\alpha)$ analogously.
Our bounds on these quantities are in Table~\ref{table:1}.

\begin{table}[ht]
	\caption{Overview of results for $n\geq 3$ (without assuming general position). \label{table:1}}
\centering
	\begin{tabular}{|l|c|c|c|c|c|c|}
\hline
 Angle     & $\alpha\to 0$ & $\alpha\in (0,\frac{\pi}{2}]$ & $\alpha\in (\frac{\pi}{2},\frac{2\pi}{3}]$ & $\alpha\in [\frac{5\pi}{5},\frac{5\pi}{6}]$ &$\alpha\in (\frac{2\pi}{3},\pi)$ & $\alpha\to \pi$ \\ \hline
$M_a(n,\alpha)$ & $2n-3$ & $2n-3\leq . \leq 3n-6$ & $3n-6$ & $3n-6$ &$3n-6$ & $3n-6$\\ \hline
$M_b(n,\alpha)$ & $3n-6$ & $3n-6$ & $3n-6$  & $3n-6$ & $3n-6$ & $3n-6$\\ \hline	
$M_a^\|(n,\alpha)$ & $2n-2$ & $2n-2\leq . \leq 4n-6$ & $4n-6$ & $4n-6$ &  $6n-O(\sqrt{n})$ & $6n-12$\\ \hline	
$M_b^\|(n,\alpha)$ & $4n-6$ & $4n-6$ & $4n-6$ & $6n-O(\sqrt{n})$ & $6n-O(\sqrt{n})$ & $6n-12$\\ \hline		
	\end{tabular}
\end{table}

\paragraph{Motivation: RAC and $\alpha AC^=$ Drawings.}
A \textbf{right angle crossing drawing} (or \emph{RAC} drawing, for short) of a graph $G=(V,E)$
is a drawing in which edges are polylines and crossing edges meet at angle $\frac{\pi}{2}$.
A \textbf{RAC$_b$} drawing is a RAC drawing where every edge is drawn as a polyline with $b$ bends;
and a graph $G=(V,E)$ is a \textbf{RAC$_b$ graph} if it admits such a drawing.
Didimo et al.~\cite{DidimoEL11} proved that a RAC$_0$-graph on $n\geq 3$ vertices has at most $4n-10$ edges,
and this bound is the best possible; see also~\cite{DujmovicGMW11}.
They also showed that every graph is a RAC$_3$ graph.
Angelini et al.~\cite{AngeliniBFK20} proved that an $n$-vertex RAC$_1$ graph has at most $5.5n-O(1)$ edges,
and this bound is asymptotically tight. Arikushi et al.~\cite{ArikushiFKMT12} showed that an $n$-vertex RAC$_2$ graph has at most $74.2n$ edges, which was recently improved to $20n$ \cite{Toth2023rac}, and is conjectured to be $10n-O(1)$~\cite{AgeliniBK00U23}.
Refer to the surveys~\cite{Didimo20,DidimoLM19} for an overview on RAC drawings and their relatives.

Dujmovi\'{c} et al.~\cite{DujmovicGMW11} extended the notion of RAC drawings to drawings where the crossing edges meet at an angle greater than some $\alpha$, for $\alpha\in (0,\frac{\pi}{2})$, and call such drawings \textbf{$\alpha$ angle crossing drawings} (\textbf{$\alpha AC$ drawings}, for short). They proved that an $n$-vertex graph with a straight-line $\alpha AC$ drawing has at most $\frac{\pi}{\alpha}(3n-6)$ edges.
Ackerman et al.~\cite{AckermanFT12} defined \textbf{$\alpha AC^=_b$ graphs}, which are graphs that can be drawn such that the edges are polylines with at most $b$ bends per edge and every crossing occurs exactly at angle $\alpha$. They proved that $n$-vertex $\alpha AC^=_1$ and $\alpha AC^=_2$ graphs have $O(n)$ edges for any $\alpha\in (0,\frac{\pi}{2}]$. In an $\alpha AC^=_1$ drawing, each edge is a polyline with two \textbf{segments} that are also called \textbf{end-segments}.
In an $\alpha AC^=_2$ drawing, each edge is a polyline with three segments: Two \textbf{end-segments} incident to the vertices, and one \textbf{middle segment}.

\begin{figure}
\centering
\includegraphics[width=0.7\textwidth]{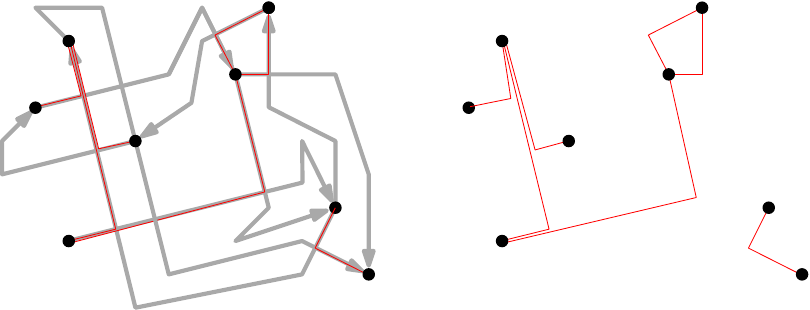}
\caption{Left: A RAC$_2$ drawing of a graph $G=(V,E)$ and the red graph $(V,\Gamma)$.
Right: Perturbation of overlapping red edges yields a $\frac{\pi}{2}$-bend multigraph.} \label{fig:2}
\end{figure}

The main technical tool in the proofs by Ackerman et al.~\cite{AckermanFT12} were $\alpha$-bend graphs, although they did not use this terminology. For example, suppose that we are given an $\alpha AC^=_2$-drawing of a directed graph $G=(V,E)$ in which the edge directions determine a \emph{first} and a \emph{last} end-segment, and suppose that all crossings are between first and last end-segments; see Fig.~\ref{fig:2}. We can create the multigraph $(V,\Gamma)$, called the \emph{red graph}, as follows:
If the first end-segment $s$ of some edge crosses any other edge, then we create an edge $\gamma(s)\in \Gamma$ as a directed path in the planarization of the $\alpha AC^=_2$-drawing: The path $\gamma(s)$ starts from the (unique) vertex in $V$ incident to $s$, then follows $s$ until its first crossing with the end-segment $s'$ of some other edge, and then it follows $s'$ to the (unique) vertex in $V$ incident to $s'$. Each edge of the red graph is an $\alpha$-bend or a $(\pi-\alpha)$-bend edge. The edges of $(V,\Gamma)$ do not cross---they may partially overlap, but they can be perturbed to remove overlaps. Thus $(V,\Gamma)$ is the union of two plane multigraphs: an $\alpha$-bend and a $(\pi-\alpha)$-bend multigraph.

Ackerman et al.~\cite{AckermanFT12} proved that every $n$-vertex $\alpha AC_1^=$-graph has at most $27n$ edges for $\alpha\in (0,\frac{\pi}{2}]$. We improve this bound to $21n-36$ for $n\geq 3$ (Theorem~\ref{thm:AC}) using the bound $M_b^\|(n,\alpha)\leq 4n-6$ for $\alpha\in (0,\frac{2\pi}{3}]$ (Theorem~\ref{thm:recursion}), applied to a red $\alpha$-bend or $(\pi-\alpha)$-bend multigraph.
This is the first improvement on the size of $\alpha AC_1^=$-graphs in more than a decade.

\paragraph{Further Related Previous Work.}
Angle constraints in graph drawing have been considered since the 1980s. Vijayan~\cite{Vijayan86} introduced \emph{angle graphs}, which are graphs such that at every vertex, the rotation of incident edges as well as the angles between consecutive edges in the rotation are given, and asked whether a given angle graph can be realized by a straight-line drawing (possibly with crossings). Planarity testing for angle graphs is NP-hard~\cite{BekosFK19,Garg98}, but there is a linear-time algorithm for triangulations (with a triangular outer face)~\cite{Battista96}. Note that a realization of an angle graph may have crossing edges: Efrat et al.~\cite{EfratFKT22} showed that for cycles, one can find a realization with the minimum number of crossings. Importantly, the edges of an angle graph are realized by straight-line segments. In contrast, we consider graphs with one-bend or circular arc edges of a fixed angle, however the angles between adjacent edges are unconstrained.

Circular arcs and one-bend polylines are among the most popular graph drawing styles. In general, both drawing styles allow more flexibility than straight-line drawings. For example, there are universal point sets of size $O(n)$ for planar $n$-vertex graphs if the edges are drawn as circular arcs~\cite{AngeliniEFKLMTW14} or as one-bend polylines~\cite{EverettLLW10}, even if the bend points are restricted to the universal point set~\cite{LofflerT15},
while the current best universal point set for straight-line embeddings is of size $n^2/2$~\cite{BannisterCDE14}. Chaplick et al.~\cite{ChaplickFK020} showed that an $n$-vertex RAC drawing with circular arc edges can have $4.5n-O(\sqrt{n})$ edges, as opposed to at most $4n-10$ edges in a straight-line RAC drawing.
Refer to the surveys~\cite{Didimo20,DidimoLM19} for a variety of results on RAC drawings.

One-bend polylines and circular arcs lose most of their competitive advantage against straight-line drawings if the angle $\alpha$ is fixed. A small angle $\alpha>0$ may even be a disadvantage. Our objective is to obtain quantitative bounds to compare planar straight-line graphs to $\alpha$-bend and $\alpha$-arc graphs and multigraphs.

\section{Preliminaries}
\label{sec:pre}

We show that the multiplicity of any edge in a $\alpha$-bend and an $\alpha$-arc graph is at most two.

\begin{proposition}
\label{pro:multiplicity}
For every $\alpha\in (0,\pi)$ and every pair of points $a,c\in \mathbb{R}^2$, a plane $\alpha$-bend (resp., $\alpha$-arc) multigraph contains at most two edges between $a$ and $c$, at most one in each halfplane bounded by the line $ac$.
\end{proposition}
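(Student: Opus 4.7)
My plan is to treat the arc and bend cases separately, both resting on the following inscribed-angle picture. For $a \ne c$ and $\alpha \in (0,\pi)$, the locus of points $b$ with $\angle abc = \alpha$ consists of two open circular arcs $A_+, A_-$, one in each open halfplane bounded by the line $ac$, and these are exactly the two arcs of central angle $2(\pi-\alpha)$ between $a$ and $c$. Concretely, the chord-length formula forces any such circle to have radius $r = |ac|/(2\sin\alpha)$; there are at most two circles of this radius through $a$ and $c$ (reflections across the line $ac$, coinciding into a single circle carrying two semicircles when $\alpha = \pi/2$), and a short case check on whether $2(\pi-\alpha)$ is the major or the minor arc confirms that the two resulting $\alpha$-arcs lie in opposite halfplanes. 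This already settles the $\alpha$-arc case: every $\alpha$-arc edge between $a$ and $c$ coincides with $A_+$ or $A_-$, so there is at most one per halfplane.

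For the $\alpha$-bend case, the converse of the Inscribed Angle Theorem (as noted in the introduction) places the bend point $b$ of every $\alpha$-bend edge $(a,b,c)$ on $A_+$ or $A_-$, so each bend edge can be labelled by which halfplane contains its bend point. It then suffices to show that two $\alpha$-bend edges with distinct bend points $b_1, b_2$ on the same arc, say $A_+$, must cross, contradicting planarity. As $b$ traverses $A_+$ from $a$ to $c$, the direction $a \to b$ rotates monotonically, so $\angle cab$ decreases monotonically from $\alpha$ (the tangent-chord angle of $A_+$ at $a$) down to $0$; WLOG $\angle cab_1 > \angle cab_2$, i.e.\ $b_1$ lies between $a$ and $b_2$ along $A_+$.

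Now consider the triangle $T_1 = \triangle(ab_1c)$. Since $A_+$ together with the chord $ac$ bounds a convex circular segment whose boundary contains $a, b_1, c$, the triangle $T_1$ lies inside this segment, and the sub-arc of $A_+$ from $b_1$ to $c$ lies strictly outside $T_1$ on the far side of the line through $b_1c$ from $a$; in particular, $b_2$ is outside $T_1$. The inequality $\angle cab_1 > \angle cab_2$ says that the ray from $a$ through $b_2$ lies strictly in the interior angular sector of $T_1$ at $a$, so the segment $ab_2$ begins inside $T_1$ and ends outside. Since $ab_2$ meets the sides $ab_1$ and $ac$ only at the shared vertex $a$, it must cross the remaining side $b_1c$ in its interior, producing the required crossing of the polylines $(a,b_1,c)$ and $(a,b_2,c)$. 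The main technical obstacle is verifying the ``which side of what'' statements (that the sub-arc from $b_1$ to $c$ lies across the line $b_1c$ from $a$, and that the ray $ab_2$ lies in the interior sector of $T_1$ at $a$) uniformly in $\alpha$, but convexity of the circular segment and monotonicity of $\angle cab$ along $A_+$ cover all $\alpha \in (0,\pi)$ at once.
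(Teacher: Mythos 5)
Your proposal is correct and follows essentially the same route as the paper: identify the two $\alpha$-arcs (one per open halfplane) as the locus of admissible bend points, and then show that two bend points on the same arc, ordered as $a,b_1,b_2,c$, force the segments $ab_2$ and $b_1c$ to cross. The only difference is that you carefully justify this last crossing (via convexity of the circular segment and the angular sector at $a$) where the paper simply asserts it, and your justification is sound.
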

\begin{proof}
Let $\alpha\in (0,\pi)$, and let $a$ and $c$ be distinct points in the plane. There are precisely two circular arcs of central angle $\pi-\alpha$ between $a$ and $c$, which lie in distinct halfplanes bounded by the line $ac$. Hence there are at most two $\alpha$-arc edges between $a$ and $c$. Every $\alpha$-bend edge between $a$ and $c$ is a polyline $(a,b,c)$ where the bend pointy $b$ is on an $\alpha$-arc edge between $a$ and $c$. However, if $(a,b_1,c)$ and $(a,b_2,c)$ are $\alpha$-bend edges, then $b_1$ and $b_2$ lie on distinct $\alpha$-arcs between $a$ and $c$, otherwise the two $\alpha$-bend edges would cross: Indeed, assume w.lo.g.\ that the points $a,b_1,b_2,c$ are in this order along the same circular arc from $a$ to $c$. Then the line segments $ab_2$ and $b_1c$ cross, hence the two $\alpha$-bend edges cross.
\end{proof}

It is easy to find the maximum number of edges for collinear points.

\begin{proposition}\label{pro:collinear}
For every set $P$ of $n\geq 3$ collinear points, we have
\begin{enumerate}\itemsep 0pt
\item $M_b(P,\alpha)=3n-6$ and $M_b^\|(P,\alpha)=4n-6$ for all $\alpha\in (0,\pi)$;
\item $M_a(P,\alpha)=3n-6$ and $M_a^\|(P,\alpha)=4n-6$ for all $\alpha\in [\frac{\pi}{2},\pi)$;
\item $M_a(P,\alpha)=2n-3$ and $M_a^\|(P,\alpha)=2n-2$ for all $\alpha\in (0,\frac{\pi}{2})$.
\end{enumerate}
\end{proposition}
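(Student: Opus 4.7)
The plan is to order the points as $p_1 < p_2 < \dots < p_n$ on $\ell$ and, in all three settings, decompose any drawing into the above-subgraph $G_+$ and the below-subgraph $G_-$ consisting of the edges in the two open halfplanes bounded by $\ell$. Every edge considered here lies in one closed halfplane: a one-bend polyline is on the side of its bend, and a circular arc with endpoints on $\ell$ stays on one side because its underlying circle meets $\ell$ in at most two points. By Proposition~\ref{pro:multiplicity} each $G_\pm$ is simple, and the multigraph count is $|E(G_+)|+|E(G_-)|$.

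For parts (1) and (2), each $G_\pm$ has all $n$ vertices on the boundary of its halfplane, hence is outerplanar, so $|E(G_\pm)|\leq 2n-3$; this gives $M_b^\|, M_a^\| \leq 4n-6$, and Euler's formula gives $M_b, M_a \leq 3n-6$. The matching lower bounds are obtained by drawing the outerplanar fan triangulation $\{p_1 p_j : 2\leq j\leq n\}\cup\{p_i p_{i+1} : 1\leq i\leq n-1\}$ above together with its mirror below (for the multigraph bound), or the same fan above together with only the $n-3$ diagonals $\{p_n p_j : 2\leq j\leq n-2\}$ below (for the simple bound). Realizability in (1) is immediate since bends may be chosen freely on the $\alpha$-arc above each chord; in (2) the $\alpha$-arcs with $\alpha\geq\pi/2$ are minor arcs, the fan-arcs through $p_1$ lie on circles internally tangent at $p_1$ (all centers on a single ray from $p_1$), and each consecutive arc stays within its chord's vertical strip.

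Part (3) is the delicate case. For $\alpha<\pi/2$ every $\alpha$-arc above $\ell$ is a \emph{major} arc; together with its chord $\overline{p_i p_j}$ it bounds a Jordan region $D_e$ in the closed upper halfplane. A direct check of $D_e$ shows that (a) every $p_k$ with $i\leq k\leq j$ lies on $\partial D_e$ (on the chord), and (b) every other $p_k$ lies in the unbounded complement of $D_e$, because at $x=p_k$ either the arc has no point at that $x$-coordinate or both of its $y$-values are strictly positive, placing $(p_k,0)$ strictly below the lower branch. Consequently, if two above $\alpha$-arcs $e=(p_i,p_j)$ and $e'=(p_k,p_l)$ have \emph{non-laminar} chord intervals --- partial overlap ($i<k<j<l$) or a shared endpoint with disjoint interiors ($j=k<l$, or the symmetric case) --- then one endpoint of $e'$ lies on $\partial D_e$ and the other strictly outside $D_e$, and since $e'$ stays above $\ell$ and meets the chord only at the shared endpoint, it must cross the arc portion of $\partial D_e$, i.e., the arc $e$. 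Hence any non-crossing family of above-arcs has chord intervals forming a laminar family on $[n]$; such a family of at-least-two-element intervals has size $\leq n-1$, giving $M_a^\|\leq 2n-2$. A short induction on the number of maximal sets shows that such a laminar family of size exactly $n-1$ must contain the full interval $[1,n]$, so if $(p_1,p_n)\in E(G_+)$ realizes $|E(G_+)|=n-1$ then $(p_1,p_n)\notin E(G_-)$ forces $|E(G_-)|\leq n-2$, giving $M_a\leq 2n-3$. Matching lower bounds use the nested fan $\{p_1 p_j : 2\leq j\leq n\}$ above (circles internally tangent at $p_1$, so pairwise non-crossing) paired with either its mirror below (multigraph, $2n-2$) or the chain $\{p_j p_n : 2\leq j\leq n-1\}$ below (simple, $2n-3$).

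The main obstacle is the geometric step in~(3): strictly disjoint chord intervals can still yield crossing major arcs when $\alpha$ is very small, so laminarity cannot be read off from any single "non-crossing implies disjoint-or-nested intervals" statement for disjoint chords; it has to be forced entirely from the two non-laminar patterns above via the Jordan-region argument, which is where the tangent-direction calculation for major arcs plays its key role.
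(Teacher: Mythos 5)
Your halfplane decomposition, the outerplanarity bound for the upper bounds in parts (1)--(2), and the lower-bound constructions all coincide with the paper's proof; the two points worth commenting on are part (3)'s upper bound and part (1)'s lower bound.

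For part (3) you take a genuinely different combinatorial route. The paper proves only the geometric fact that two upper major arcs sharing an endpoint and extending in opposite directions along the line must cross, concludes that in each halfplane every vertex lies to one side of all its neighbors, and deduces acyclicity (hence at most $n-1$ edges per side, with a connectivity argument about the edge $p_1p_n$ for the simple case). You instead show that the chord intervals of a non-crossing family of major arcs form a laminar family --- which needs the additional geometric fact that partially overlapping intervals also force a crossing --- and then invoke the bound of $n-1$ for laminar families of at-least-two-element sets, plus the observation that an extremal such family must contain $[1,n]$. Your Jordan-region argument covers both non-laminar patterns correctly: one endpoint of the second arc lies in (or enters) $D_e$ while the other lies outside $\overline{D_e}$, and since the second arc meets the line only at its own endpoints it cannot escape through the chord, so it crosses the arc portion of $\partial D_e$. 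This route is arguably more robust than the paper's: the step ``every vertex is a source or a sink in the left-to-right orientation, hence no cycles'' is not a purely combinatorial implication (such orientations admit even cycles), whereas laminarity bounds the count directly.

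The one genuine gap is the lower bound in part (1). ``Bends may be chosen freely on the $\alpha$-arc above each chord'' does not justify planarity: for $\alpha<\frac{\pi}{2}$ the $\alpha$-arcs over overlapping chords (e.g.\ $p_1p_k$ and $p_ip_{i+1}$) are major arcs that themselves cross, so an arbitrary choice of bend points can produce crossing polylines. The paper's construction places each bend at the apex of its arc, which makes every one-bend edge an $x$-monotone curve confined to its chord's vertical strip for \emph{every} $\alpha\in(0,\pi)$ (unlike the arcs), so the polylines are pairwise non-crossing; it then perturbs the bend points because adjacent edges sharing a vertex have overlapping initial segments (both leave the common vertex with slope $\pm\cot\frac{\alpha}{2}$), which would violate injectivity of the embedding. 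Both the specific choice of bend and the perturbation step are needed and are missing from your sketch; with them, part (1) goes through.
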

\begin{proof}
Assume w.l.o.g.\ that $P=\{p_1,\ldots , p_n\}$ is a set of $n$ points on the $x$-axis sorted by increasing $x$-coordinates.

\smallskip\noindent\textbf{Upper Bounds.}
By definition, $\alpha$-arc and $\alpha$-bend graphs are simple planar graphs. By Euler's polyhedron formula, $M_a(P,\alpha)\leq 3n-6$ and $M_b(P,\alpha)\leq 3n-6$. Suppose that multiple edges are allowed. By Proposition~\ref{pro:multiplicity}, the edges in each halfplane form a simple graph. Since all vertices are on the boundary of both halfplanes, the edges in each halfplane form an outerplanar graph each with at most $2n-3$ edges. This proves $M_a^\|(P,\alpha)\leq 4n-6$ and $M_b^\|(P,\alpha)\leq 4n-6$ for all $\alpha\in (0,\pi)$.

It remains to consider $\alpha$-arc graphs and multigraphs for $\alpha\in (0,\frac{\pi}{2})$.
Recall that an \textbf{$\alpha$-arc edge} is a circular arc between two points with central angle $2(\pi-\alpha)$.
Note that $2(\pi-\alpha)\in (\pi,2\pi)$ for $\alpha\in (0,\frac{\pi}{2})$, thus an $\alpha$-arc edge is more than a halfcircle.
Let $G=(V,E)$ be an $\alpha$-arc multigraph, and let $G^+=(V,E^+)$ and $G^-=(V,E^-)$ be the subgraphs formed by the edges in the upper and lower halfplane, respectively. If $i<j<k$ and $\alpha\in (0,\frac{\pi}{2})$, then the $\alpha$-arc edges $p_ip_j$ and $p_jp_k$ would cross if they are both in the upper halfplane (or both in the lower halfplane). Consequently, in both $G^+$ and $G^-$, each vertex is either to the left or to the right of all of its neighbors.
It follows that neither $G^+$ nor $G^-$ can contain cycles, hence they each have at most $n-1$ edges. This implies $M_a^\|(P,\alpha)\leq 2n-2$ for all $\alpha\in (0,\frac{\pi}{2})$. Note that if $G$ is a simple graph, then $G^+$ and $G^-$ cannot both contain the edge $p_1p_n$. Assume w.l.o.g.\ that $G^-$ does not contain the edge $p_1p_n$. Then $G^-$ cannot contain any path from $p_1$ to $p_n$, consequently it is a forest with at least two components, and so it has at most $n-2$ edges.
This yields the upper bound $M_a(P,\alpha)=2n-3$ for $\alpha\in (0,\frac{\pi}{2})$.

\smallskip\noindent\textbf{Lower Bound Constructions.}
We start with $\alpha$-arc graphs. For $\alpha\in [\frac{\pi}{2},\pi)$, an $\alpha$-arc edge $p_ip_j$ is an $x$-monotone arc that lies in the vertical strip bounded by the vertical lines through $p_i$ and $p_j$.
Add $\alpha$-arc edges $p_i p_{i+1}$ for all $i=1,\ldots ,n-1$ and edges $p_1 p_j$ for all $j=2,\ldots n$ in the upper halfplane; and edge $p_jp_n$ for all $j=2,\ldots ,n-2$ in the lower halfplane. This yields $(n-1)+(n-2)+(n-3)=3n-6$ edges; see the red edges in Fig.~\ref{fig:3}~(left).
We can augment this construction to an $\alpha$-arc multigraph by adding $\alpha$-arc edges $p_ip_{i+1}$ for all $i=1,\ldots , n-1$ and the edge $p_ip_n$ in the lower halfplane as well. This yields an $\alpha$-arc multipraph with $(3n-6)+n=4n-6$ edges; see Fig.~\ref{fig:3}~(left).

For $\alpha\in (0,\frac{\pi}{2})$, we construct an $\alpha$-arc graph with $2n-3$ edges as follows. In the upper halfplane, we use the star formed by $p_1 p_i$ for all $i=2,\ldots ,n$; and in the lower halfplane, the star formed by $p_jp_n$ for all $j=2,\ldots ,n-2$. For an $\alpha$-arc multigraph with $2n-4$ edges, we add the edge $p_1p_n$ in the lower halfplane as well. Both are subgraphs of the constructions in Fig.~\ref{fig:3}~(left).

\begin{figure}
\centering
\includegraphics[width=0.7\textwidth]{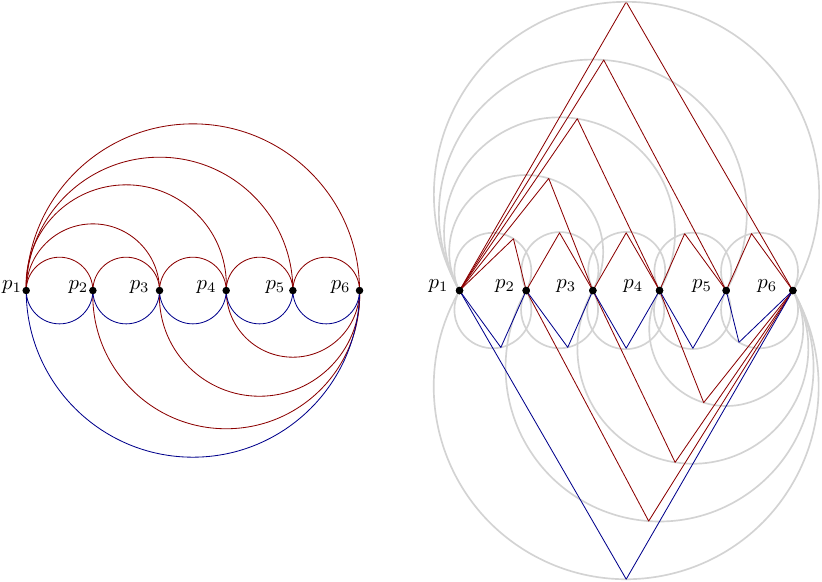}
\caption{Left: a $\frac{\pi}{2}$-arc multigraph with $4n-6$ edges; the red edges form a $\frac{\pi}{2}$-arc graph with $3n-6$ edges.
Right: a $\frac{\pi}{3}$-bend multigraph with $4n-6$ edges; the red edges form a $\frac{\pi}{3}$-bend graph with $3n-6$ edges. In both examples, $n=6$.} \label{fig:3}
\end{figure}

We can construct an $\alpha$-bend graph with $3n-6$ edges and an $\alpha$-bend multigraph with $4n-6$ edges by connecting the same pair of vertices for any $\alpha\in (0,\pi)$. We describe the construction is three steps: (1) Draw all $\alpha$-arc edges described above (these edges may cross for $\alpha<\pi/2$). (2) For each $\alpha$-arc $p_ip_j$, create an $\alpha$-bend polyline $(p_i,b,p_j)$ such that $b$ is the midpoint of the $\alpha$-arc, which implies that $\Delta(p_i b p_j)$ is an isosceles triangle. Such an $\alpha$-bend edge $p_ip_j$ is an $x$-monotone arc for any $\alpha\in (0,\pi)$. Consequently, these $\alpha$-bend edges do not cross, but adjacent edges may partially overlap. (3) We successively perturb the $\alpha$-bend edges as follows; see Fig.~\ref{fig:3}~(right). We perform the perturbation in each halfplane independently. Consider the edges $p_ip_j$ in the upper (resp., lower) halfplane ordered by nonincreasing length. For an edge $p_ip_k$, if both $p_ip_j$ and $p_jp_k$ are edges for some $i<j<k$, then perturb both $p_ip_j$ and $p_jp_k$ by slightly moving their bend points along the corresponding $\alpha$-arc towards each other (i.e., counterclockwise and clockwise).  The perturbation eliminates the overlap between adjacent edges, and yields an $\alpha$-bend graph with $3n-6$ edges and an $\alpha$-bend multigraph with $4n-6$ edges.
\end{proof}

\begin{corollary}\label{cor:collinear}
For every integer $n\geq 3$, we have
$M_b(n,\alpha)=3n-6$ for all $\alpha\in (0,\pi)$;
and
$M_a(n,\alpha)=3n-6$ for all $\alpha\in [\frac{\pi}{2}, \pi)$.
\end{corollary}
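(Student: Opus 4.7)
The plan is to combine the collinear-point construction from Proposition~\ref{pro:collinear} (lower bound) with Euler's polyhedron formula (upper bound), both of which have already been established. Since $M_b(n,\alpha)$ and $M_a(n,\alpha)$ are defined as the supremum of $M_b(P,\alpha)$ and $M_a(P,\alpha)$ over all $n$-point sets $P\subset\mathbb{R}^2$, it suffices to exhibit a single point set that attains $3n-6$ edges and to prove that no point set can exceed $3n-6$.

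First I would handle the upper bound: both $\alpha$-bend graphs and $\alpha$-arc graphs are, by definition, simple plane graphs (the multiplicity issue is what distinguishes the simple case from the multigraph case, as shown in Proposition~\ref{pro:multiplicity}). Applying Euler's polyhedron formula to such a graph embedded in the plane, we obtain $|E|\leq 3n-6$ for all $n\geq 3$. Taking the supremum over all $n$-point sets $P$, this yields $M_b(n,\alpha)\leq 3n-6$ for every $\alpha\in(0,\pi)$ and $M_a(n,\alpha)\leq 3n-6$ for every $\alpha\in(0,\pi)$.

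Next, I would obtain the matching lower bound by picking a convenient witness point set, namely $n$ collinear points. Proposition~\ref{pro:collinear}(1) provides a collinear set $P$ with $M_b(P,\alpha)=3n-6$ for all $\alpha\in(0,\pi)$, giving $M_b(n,\alpha)\geq M_b(P,\alpha)=3n-6$. Proposition~\ref{pro:collinear}(2) provides a collinear set $P$ with $M_a(P,\alpha)=3n-6$ for all $\alpha\in[\frac{\pi}{2},\pi)$, giving $M_a(n,\alpha)\geq M_a(P,\alpha)=3n-6$ in that range of $\alpha$. Combining the matching upper and lower bounds finishes the proof in both cases.

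There is essentially no obstacle here: the corollary is a direct consequence of Proposition~\ref{pro:collinear} together with the elementary planar upper bound. The only care needed is to note that the construction in Proposition~\ref{pro:collinear} is a realization on an \emph{arbitrary} but fixed collinear point set, so using it to lower-bound the supremum $M_b(n,\alpha)$ or $M_a(n,\alpha)$ over all $n$-point configurations is immediate. The restriction $\alpha\in[\frac{\pi}{2},\pi)$ in the $\alpha$-arc case is essential and must be respected: for $\alpha\in(0,\frac{\pi}{2})$ one gets only $2n-3$ on collinear points (and indeed the table entry shows the upper bound is not known to equal $3n-6$ in that regime), so the corollary as stated is sharp in its hypothesis.
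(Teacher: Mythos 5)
Your proposal is correct and matches the paper's (implicit) argument exactly: the corollary is stated without a separate proof because it follows from Proposition~\ref{pro:collinear} in precisely the way you describe, with the collinear construction supplying the lower bound on the supremum and Euler's formula (valid for any point set, since $\alpha$-bend and $\alpha$-arc graphs are simple plane graphs by definition) supplying the matching upper bound. Nothing further is needed.
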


\section{Asymptotics: Large and Small Angles}
\label{sec:LS}

In this section, we study the maximum number of edges in $\alpha$-bend and $\alpha$-arc graphs as the angle $\alpha$ tends to 0 or $\pi$.

\subsection{Large Angles}
\label{ssec:large}
For a sufficiently large $\alpha$, the $\alpha$-bend and $\alpha$-arc edges are similar to straight-line edges.

\begin{proposition}
\label{pro:largeangle}
For every $P\subset \mathbb{R}^2$ in general position, there exists a threshold $\alpha_0\in (0,\pi)$
such that, for every $\alpha\in (\alpha_0,\pi)$, we have $M_a(P,\alpha)=M_b(P,\alpha) = M(P)$ and $M_a^\|(P,\alpha)=M_b^\|(P,\alpha)=2M(P)$.
\end{proposition}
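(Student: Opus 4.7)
The plan is to exploit the fact that as $\alpha\to\pi$, every $\alpha$-arc (and every $\alpha$-bend edge, whose bend lies on such an arc) converges in Hausdorff distance to its chord, since the sagitta of the defining circle shrinks to $0$. Because $P$ is finite and in general position, I expect to prove a \emph{continuity lemma}: there exists $\alpha_0\in(0,\pi)$ such that for every $\alpha\in(\alpha_0,\pi)$ and every two $\alpha$-arc (or two $\alpha$-bend) edges with endpoints in $P$, the two edges have a proper interior crossing if and only if their chords do. All four equalities will follow quickly from this lemma.

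For the upper bounds, let $G$ be a plane $\alpha$-bend (resp., $\alpha$-arc) graph on $P$. Replacing each edge by its chord yields a straight-line drawing on $P$, which by the continuity lemma is crossing-free; hence $|E(G)|\le M(P)$, so $M_b(P,\alpha),M_a(P,\alpha)\le M(P)$. In the multigraph setting, Proposition~\ref{pro:multiplicity} says each pair of vertices carries at most two edges, lying on opposite sides of the chord; partitioning the edge set arbitrarily into two simple subgraphs (one edge from every multiplicity-two pair in each part) and applying the simple-graph bound gives $M_a^\|(P,\alpha),M_b^\|(P,\alpha)\le 2M(P)$.

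For the lower bounds, I would start from an edge-maximal plane straight-line graph on $P$ (a triangulation of the convex hull, with $M(P)$ edges) and replace every segment by an $\alpha$-arc on a chosen side; by the continuity lemma the result is a plane $\alpha$-arc graph with $M(P)$ edges, so $M_a(P,\alpha)\ge M(P)$. The analogous replacement by $\alpha$-bend edges (bend at the midpoint of that arc) proves $M_b(P,\alpha)\ge M(P)$. To realize $2M(P)$ edges in the multigraph version, I would double each segment using \emph{both} $\alpha$-arcs (one in each halfplane bounded by the chord); the continuity lemma again certifies non-crossing, and the analogous $\alpha$-bend construction then completes the lower bound.

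The main obstacle is the continuity lemma, specifically its transversality direction: if two chords $xy$ and $x'y'$ cross transversally at an interior point $p$, I must show that any two curves sufficiently close to them (with the same endpoints) still cross. General position yields a strictly positive separation for every non-crossing pair of chords spanned by $P$, and a transversal crossing with both endpoints bounded away from $p$ for every crossing pair. Since only finitely many pairs are involved, a single uniform tolerance $\varepsilon(P)>0$ suffices; choosing $\alpha_0$ so that the sagitta of every $\alpha$-arc on $P$ stays below $\varepsilon(P)$ then reduces the lemma to a routine topological argument (a curve close to $x'y'$ must still cross the chord $xy$ near $p$, and symmetrically).
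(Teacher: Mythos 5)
Your overall architecture matches the paper's: lower bounds by doubling every edge of an edge-maximal triangulation (one arc on each side of each chord), upper bounds by replacing arcs and bend edges with their chords, and Proposition~\ref{pro:multiplicity} plus a two-coloring of parallel edges for the multigraph upper bound. The difference is that you compress all the geometric content into a single Hausdorff-distance ``continuity lemma,'' and your justification of that lemma has a genuine gap. The claim that general position gives ``a strictly positive separation for every non-crossing pair of chords spanned by $P$'' is false for \emph{adjacent} chords: two chords sharing an endpoint $a$ have separation zero at $a$, so no bound on the sagitta can be compared against it. And the issue is not merely formal: the tangent to an $\alpha$-arc at an endpoint makes angle $\pi-\alpha$ with its chord, so if two arcs emanating from $a$ bulge toward each other and $2(\pi-\alpha)$ exceeds the angle $\angle bac$ between their chords, the cyclic order of their tangent directions at $a$ is reversed relative to the order of the chords, and since two distinct circles through $a$ meet in at most one further point, a parity argument forces the arcs to cross near $a$. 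Whether this happens is governed by the angle at the shared vertex, not by how close each arc is to its own chord, so ``sagitta below $\varepsilon(P)$'' does not reduce this case to a routine separation argument.

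This case is the crux of your lower-bound construction, not a corner case: doubling every edge of a triangulation places, at every corner of every triangular face, two arcs sharing that vertex and bulging into the face toward each other. The repair is exactly what the paper's threshold $\alpha_1$ encodes: require $\pi-\alpha$ to be small relative to the minimum interior angle $\beta$ of the triangulation, and certify non-crossing by showing each arc stays inside the subtriangle cut off by connecting the corners of each face to its incenter. (Your transversality direction---crossing chords force crossing arcs---is only needed for non-adjacent pairs and is fine as sketched; the paper instead uses a second threshold $\alpha_2$ guaranteeing that every lens is empty of vertices of $P$ before replacing arcs by chords.) So the statement of your continuity lemma is salvageable, but as argued it does not establish the planarity of the doubled construction, and the proof needs the additional angle condition at shared vertices.
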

\begin{proof}
Let $G$ be a planar straight-line graph on $P$ with the maximum number of edges (i.e., with $M(P)$ edges). Then $G$ is a triangulation of the convex hull of $P$. Let $\beta$ be the minimum interior angle over all triangular faces of $G$, and $\alpha_1=\pi-\beta$. Recall that the bisectors of the interior angles of a triangle $T$ meet the center $c(T)$ of the inscribed circle of $T$. Subdivide each triangular face $T$ of $G$ into three subtriangles by connecting the corners of $T$ to the center $c(T)$. Then each subtriangle incident to two points in $P$, and contains an $\alpha$-arc edge between them for any $\alpha\in (\alpha_1,\pi)$. The outer face of $G$ is the exterior of $\mathrm{conv}(P)$, and it contains an $\alpha$-arc edge between any two consecutive vertices of $\mathrm{conv}(P)$ for any $\alpha\in (\frac{\pi}{2},\pi)$. Furthermore, these $\alpha$-arc edges are pairwise noncrossing. Overall, we can find $2|E(G)|=2M(P)$ pairwise noncrossing $\alpha$-arc edges, which form an $\alpha$-arc multigraph on $P$. These constructions show that $M_a^\|(P,\alpha)\geq 2M(P)$ and $M_b^\|(P,\alpha)\geq 2M(P)$ for $\alpha \in (\alpha_1,\pi)$.

By choosing an arbitrary bend point in each $\alpha$-arc edge, construct $2|E(G)|=2M(P)$ pairwise noncrossing $\alpha$-bend edges. By deleting double edges, we also obtain $\alpha$-arc and $\alpha$-bend graphs with $|E(G)|=M(P)$ edges. Consequently, $M_a(P,\alpha)\geq M(P)$ and $M_b(P,\alpha)\geq M(P)$  for $\alpha \in (\alpha_1,\pi)$.

For matching upper bounds, let $\alpha_2$ be the maximum angle between any two adjacent straight-line edges determined by $P$. Then, given any $\alpha$-arc graph $HG$ on $P$, the convex hull of any edge $ab$ does not contain any other vertices in $P$. Consequently, we can replace each $\alpha$-arc edge in $H$ with a straight-line edge, and obtain a planar straight-line graph with $|E(H)|$ edges. This proves $M_a(P,\alpha)\leq M(P)$ and $M_b(P,\alpha)\leq M(P)$ for $\alpha \in (\alpha_2,\pi)$. For multigraphs, up to two parallel edges could be replaced by a straight-line edge by Proposition~\ref{pro:multiplicity}, consequently $M_a^\|(P,\alpha)\geq 2M(P)$ and $M_b^\|(P,\alpha)\geq 2M(P)$ for $\alpha \in (\alpha_2,\pi)$.

Overall, we put $\alpha_0=\max\{\alpha_1,\alpha_2\}$, and then for all $\alpha \in (\alpha_0,\pi)$, we have $M_a(P,\alpha)=M_b(P,\alpha) = M(P)$ and $M_a^\|(P,\alpha)=M_b^\|(P,\alpha)=2M(P)$.
\end{proof}

\subsection{Small Angles}
\label{ssec:small}
In the other end of the spectrum, for sufficiently small $\alpha>0$, both $M_b(P,\alpha)$ and $M_b^\|(P,\alpha)$ have the same behavior as for collinear points.

\begin{proposition}\label{pro:small-angle}
For every set $P$ of $n$ points in the plane, there exists a threshold $\alpha_0\in (0,\pi)$ such that for all $\alpha\in (0,\alpha_0)$, we have
$M_b(P,\alpha)=3n-6$ and $M_b^\|(P,\alpha)=4n-6$.
\end{proposition}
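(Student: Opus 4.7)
The upper bounds are immediate. $M_b(P,\alpha)\le 3n-6$ follows from planarity (Euler's formula for plane simple graphs), and $M_b^{\|}(P,\alpha)\le 4n-6$ holds for every $\alpha\in(0,\frac{2\pi}{3}]$ by Theorem~\ref{thm:recursion}. Hence any threshold $\alpha_0\le\frac{2\pi}{3}$ establishes the ``$\leq$'' direction of both identities.

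For the lower bound, the plan is to transport the collinear construction of Proposition~\ref{pro:collinear} to the general-position set $P$. Fix a direction $v$ in which the $n$ points of $P$ have pairwise distinct coordinates, and relabel $P=\{p_1,\ldots,p_n\}$ in increasing $v$-order. Use the same combinatorial edge set as in the collinear construction: the $n-1$ consecutive pairs $p_ip_{i+1}$ (each realized twice, once in each of the two halfplanes bounded by the line $p_ip_{i+1}$), the ``upper fan'' $\{p_1p_j:3\le j\le n\}$, and the ``lower fan'' $\{p_jp_n:2\le j\le n-2\}$, plus the extra edge $p_1p_n$ drawn in the ``lower'' halfplane, giving $4n-6$ pairs in total; retaining a single copy of each consecutive pair together with the two fans yields the $3n-6$-edge simple subgraph. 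For every such pair $(p_i,p_j)$ the intermediate value theorem supplies a point $b_{ij}$ on the designated $\alpha$-arc with $v(b_{ij})=\frac{v(p_i)+v(p_j)}{2}$, making the polyline $(p_i,b_{ij},p_j)$ $v$-monotone; in particular, two polylines whose $v$-projections $[v(p_i),v(p_j)]$ and $[v(p_k),v(p_l)]$ are disjoint cannot cross.

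The main obstacle is proving the non-crossing property for polylines whose $v$-ranges overlap, and this is where the smallness of $\alpha$ becomes essential. The key quantitative input is that the $\alpha$-arc of $p_ip_j$ has radius $|p_ip_j|/(2\sin\alpha)$, which tends to infinity as $\alpha\to 0$, so bend points can be placed arbitrarily far from $P$. Among the at most two points on the arc with the prescribed $v$-coordinate I would select the one farthest from the chord $p_ip_j$; then, as $\alpha\to 0$, each polyline converges in the Hausdorff metric to a pair of parallel rays emanating from $p_i$ and $p_j$ perpendicular to the chord. The combinatorial structure of the collinear construction---nested fans rooted at $p_1$ or $p_n$, pairwise interior-disjoint consecutive pairs---yields a non-crossing property for these limiting ray configurations, which then persists for every $\alpha$ below a threshold $\alpha_0=\alpha_0(P)>0$ by a compactness argument.

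The technical crux is that, unlike the collinear case where all ``upper'' first-segments share a common slope $\frac{1+\cos\alpha}{\sin\alpha}$, in the general case the limit rays on different chords have different perpendicular directions, so they are not globally parallel. I would close this gap either by a continuous deformation from a collinear reference configuration---tracking the $4n-6$-edge drawing and uniformly bounding $\alpha_0$ along the deformation---or by an induction on $n$ in which bend points are placed one edge at a time, using the one-dimensional freedom along each $\alpha$-arc for small $\alpha$ to locally avoid every previously drawn polyline.
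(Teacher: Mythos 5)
Your upper bounds are fine, and your overall plan (transport the collinear construction of Proposition~\ref{pro:collinear} along a direction in which the points have distinct coordinates) is exactly the paper's strategy. But the lower bound as you describe it has a genuine gap, and it is precisely the one you flag as the ``technical crux.'' By placing each bend point at the far point of its $\alpha$-arc, you send the two segments of the edge $p_ip_j$ off (nearly) perpendicular to the \emph{chord} $p_ip_j$, and different chords have different perpendiculars. Two long polylines with nested $v$-ranges (e.g.\ two fan edges at $p_1$ whose chords make a substantial angle with each other) can then genuinely cross far from $P$, so $v$-monotonicity alone does not save you. Neither of your proposed repairs is worked out: the greedy/inductive placement is doubtful because each new edge is anchored at both endpoints and the single degree of freedom along the arc need not let it avoid a previously drawn edge that separates $p_i$ from $p_j$, and the ``deformation from a collinear configuration'' gives no uniform control on $\alpha_0$ as stated.

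The paper closes this gap with a much simpler choice that you should adopt: do not use per-chord directions at all. After rotating so that the points $p_0,\ldots,p_{n-1}$ have distinct $x$-coordinates, draw \emph{every} edge with its two segments of slopes $\pm\cot\frac{\alpha}{2}$ (upward tents for upper edges, downward for lower edges); the bend point is then the intersection of two fixed-slope lines through $p_i$ and $p_j$, and it automatically lies on the $\alpha$-arc. Because all edges now share the same two segment directions, every edge is an $x$-monotone ``tent'' and the non-crossing analysis reduces verbatim to the collinear case via projection to the $x$-axis. The threshold is explicit: take $\alpha_0$ to be the minimum angle between the vertical and a segment $p_{i-1}p_i$; for $\alpha<\alpha_0$ the segments of slope $\pm\cot\frac{\alpha}{2}$ are steeper than every edge of the $x$-monotone path $(p_0,\ldots,p_{n-1})$, so that path plays the role of the $x$-axis and separates the upper from the lower edges. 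The remaining overlaps between adjacent edges are removed by the same perturbation as in Proposition~\ref{pro:collinear}. With this modification your argument is complete; without it, the non-crossing claim is unproven and, for some point sets, false as stated.
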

\begin{proof}
By applying a rotation, if necessary, we may assume that the points in $P$ have distinct $x$-coordinates.
Let $P=\{p_0,\ldots, p_{n-1}\}$ be sorted by (increasing) $x$-coordinate. Let $\alpha_0$ be the minimum angle between a vertical line and a segment $p_{i-1}p_i$ for $i=1,\ldots , n-1$. Now for any angle $\alpha\in (0,\alpha_0)$, we can follow the argument in the proof of Proposition~\ref{pro:collinear}; see Fig.~\ref{fig:4}~(left).
The $x$-monotone path $(p_0,\ldots , p_{n-1})$ plays the role of the $x$-axis: It separates upper and lower edges.
We initially draw each $\alpha$-bend edge $p_ip_j$ so that its two segments have slopes $\pm \cot \frac{\alpha}{2}$,
and then perturb overlapping edges as in the proof of Proposition~\ref{pro:collinear}.
\end{proof}

\begin{figure}
\centering
\includegraphics[width=0.85\textwidth]{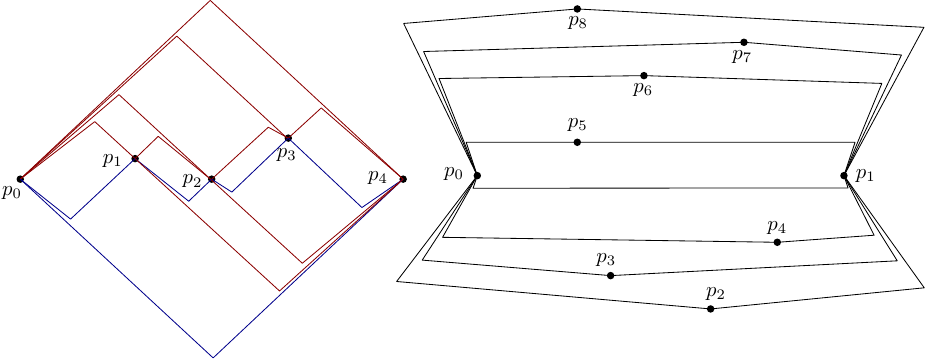}
\caption{Left: an $\alpha$-bend multigraph with $4n-6$ edges; the red edges form an $\alpha$-bend graph with $3n-6$ edges.
Right: an $\alpha$-bend graph with $3n-7=M(P)$ edges for a point set $P$ with a quadrilateral convex hull.} \label{fig:4}
\end{figure}

For points in general position and for $\alpha$-bend graphs, we can take the threshold $\alpha_0$ in Proposition~\ref{pro:small-angle} to be $\frac{\pi}{2}$; see Fig.~\ref{fig:4} (right).
\begin{proposition}
\label{pro:middle}
For every set $P\subset \mathbb{R}^2$ of $n\geq 3$ points in general position, and for every $\alpha\in (0,\frac{\pi}{2}]$,
we have $M_b(P,\alpha)\geq 3n-7$ and $M_b(P,\alpha)\geq M(P)$.
\end{proposition}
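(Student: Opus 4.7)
The plan is to construct an $\alpha$-bend graph on $P$ with at least $\max\{M(P),\, 3n-7\}$ edges. Let $h$ denote the number of hull vertices of $P$. When $h \leq 4$ we have $M(P) \geq 3n-7$, so it suffices to realize any triangulation of $P$ as an $\alpha$-bend graph; when $h \geq 5$ we have $M(P) = 3n-h-3 < 3n-7$, so one must find $h-4$ edges beyond any straight-line triangulation.

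The main building block is to realize an arbitrary triangulation $T$ of $P$ (with $M(P)$ straight-line edges) as an $\alpha$-bend graph. For each edge $ac$ of $T$, I replace it with a polyline $(a, b, c)$ whose bend point $b$ lies on the $\alpha$-arc between $a$ and $c$, very close to one of the endpoints. As $b$ slides along the arc toward an endpoint, the polyline $(a,b,c)$ converges to the segment $ac$ in the Hausdorff metric, so polylines corresponding to non-adjacent triangulation edges remain non-crossing under sufficiently small perturbation. The delicate case is at a shared vertex $v$: each incident polyline has a short \emph{tail} at $v$ whose tangent direction is rotated by $\pi-\alpha$ from the original edge direction. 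For each edge I have four options --- choose (i) the endpoint near which the bend is placed, and (ii) which of the two sides of the chord hosts the $\alpha$-arc --- where the two-sided choice in (ii) is guaranteed by $\alpha \leq \pi/2$ (the $\alpha$-arc is at least a semicircle, so both sides are available). The main obstacle is showing that these choices can be resolved, vertex by vertex, so the tails at $v$ interleave compatibly with the edge rotation and do not cross other polylines; up to resolving finitely many angular coincidences by flipping local choices, a valid assignment exists.

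For $h \geq 5$, I then augment the triangulation by $h-4$ extra $\alpha$-bend edges between non-adjacent hull vertices, routed through the exterior of the convex hull. Combinatorially, the outer face of $T$ (a topological disk with $h$ boundary vertices) admits up to $h-3$ additional non-crossing chords, so the desired $h-4$ extras certainly fit. Each extra edge $uv$ is realized as an $\alpha$-bend polyline whose bend point $b$ is placed on the exterior side of the $\alpha$-arc of $uv$. The additional geometric difficulty is that the two straight segments $ub$ and $bv$ must themselves lie in the exterior of the hull, not cut into its interior; this is ensured by favoring pairs of hull vertices separated by exactly one intermediate hull vertex, so the chord $uv$ hugs the hull boundary and its exterior $\alpha$-arc (at least a semicircle, since $\alpha \leq \pi/2$) extends cleanly outward, and by choosing $b$ sufficiently far into the exterior along that arc. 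Combining the $M(P)$ bent edges of the triangulation with these $h-4$ exterior ones yields the bound $3n-7$.
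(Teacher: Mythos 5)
Your proposal takes a genuinely different route from the paper (perturbing an arbitrary triangulation, rather than building a specific graph), but it has a real gap at its core, namely the sentence ``up to resolving finitely many angular coincidences by flipping local choices, a valid assignment exists.'' That is precisely the hard part, and it is not merely a technicality. When the bend $b$ of an edge $vc$ is placed on the $\alpha$-arc at distance $\varepsilon$ from $v$, the tail $vb$ leaves $v$ in a direction rotated by $\pi-\alpha\geq \frac{\pi}{2}$ from $\mathrm{dir}(vc)$, and the thin triangle $\Delta(v\,b\,c)$ occupies, at all radii up to $\Theta(\varepsilon)$, the \emph{entire} angular sector of width $\pi-\alpha$ between $\mathrm{dir}(vc)$ and the tail direction. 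Any other incident edge $vc'$ whose polyline must pass through that sector at comparable radius (in particular, any edge whose bend sits near $c'$, so that it leaves $v$ as a long segment in direction $\approx\mathrm{dir}(vc')$) is forced to cross $b c$. Hence, in the naive scheme, $v$ can host the bend of $vc$ only if one of the two angular gaps adjacent to $vc$ in the rotation at $v$ exceeds $\pi-\alpha\geq\frac{\pi}{2}$. This condition can fail at \emph{both} endpoints of an edge simultaneously: in a triangulated patch of the triangular lattice every interior vertex has all six gaps equal to $\frac{\pi}{3}<\pi-\alpha$, so no interior--interior edge can be hosted at either endpoint, for any $\alpha\in(0,\frac{\pi}{2}]$. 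Escaping this requires a nontrivial nesting of tails of different lengths and sides, with consistency constraints propagating along the whole triangulation; nothing in the proposal addresses this, so the central building block (``realize an arbitrary triangulation of $P$'') is unestablished, and as described it is false. The exterior augmentation has a secondary gap as well: pairwise noncrossing exterior chords joining hull vertices at hull-distance two number at most $\lfloor h/2\rfloor$, which is less than the required $h-4$ once $h\geq 9$, so longer exterior chords (whose $\alpha$-bend realizations can cut back into the hull) would be needed.

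For contrast, the paper avoids all of this by not trying to realize a triangulation at all. It takes a diametric pair $p_0,p_1$ (rotated to be horizontal, so every other point lies in the vertical strip between them), draws the double star $p_0p_i$, $p_ip_1$ with the segment at each $p_i$ horizontal --- the condition $\alpha\leq\frac{\pi}{2}$ then forces the segments at $p_0$ into the halfplane left of $p_0$ and those at $p_1$ to the right, so noncrossing is immediate after a small perturbation --- and adds a near-horizontal path through the $p_i$ sorted by $y$-coordinate, for $(2n-3)+(n-4)=3n-7$ edges; the bound $M_b(P,\alpha)\geq M(P)$ then follows by a case split on whether the hull is a triangle. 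If you want to salvage your approach, you would need either a combinatorial lemma guaranteeing a consistent assignment of bend endpoints, sides, and nested tail lengths for some triangulation of $P$, or a restriction to triangulations with large angular gaps; as written, the argument does not go through.
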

\begin{proof}
Let $\{p_0, p_1\}\subset P$ be a diametric pair of points, which maximizes the pairwise distances in $P$. By applying a rotation, if necessary, we may assume that the line segment $p_0p_1$ is horizontal. Since $P$ is in general position, no two points in $P\setminus \{p_0,p_1\}$ have the same $y$-coordinate. Label the points in $P$ such that $p_0$ is the leftmost point, $p_1$ is the rightmost point, and $P\setminus \{p_0,p_1\}=\{p_2,\ldots , p_{n-1}\}$ sorted in increasing $y$-coordinates; see Fig.~\ref{fig:4}~(right).

We construct an $\alpha$-bend graph on $P$ as follows. For all $i\in \{2,\ldots , n-1\}$, add $\alpha$-bend edges $p_0p_i$ and $p_ip_1$ such that the edge segments incident to $p_i$ are horizontal, and lie to the left and right of $p_i$, respectively. Since $\alpha\in (0,\frac{\pi}{2}]$, then the edge segments incident to $p_0$ lie in the closed halfplane left of $p_0$; and the edge segments incident to $p_1$ lies to the right of $p_1$. Note also that the edge segments incident to $p_0$ and $p_1$ may overlap: We perturb these edges to maintain $\alpha$-bend edges, but eliminate the overlap (as a result, the edge segments incident to $p_i$ are no longer horizontal, but almost horizontal). Add the edge $p_0p_1$ as well, where the edge segment incident to $p_0$ is almost horizontal, and the edge segment incident to $p_1$ is very short. We have added $2(n-2)+1=2n-3$ edges so far.

For all $i\in \{2,\ldots, n-3\}$, if both $p_i$ and $p_{i+1}$ are on the same side of the horizontal line $p_0p_1$, then we add an $\alpha$-bend edge $p_ip_{i+1}$ in the horizontal strip between $p_i$ and $p_{i+1}$. The edge segment incident to $p_i$ should almost horizontal (but disjoint from the edges $p_0p_i$ and $p_1p_i$), and on the same side (left or right) of $p_i$ that contains $p_{i+1}$. This determines the direction of the edge segment incident to $p_{i+1}$. There is at most one $i\in \{2,\ldots , n-3\}$ such that $p_i$ and $p_{i+1}$ are on opposite sides of the line $p_0p_1$, so we add at least $n-4$ edges. We obtain an $\alpha$-bend graph with $(2n-3)+(n-4)=3n-7$ edges. This proves that $M(P,\alpha)\geq 3n-8$.

If the convex hull of $P$ contains 4 or more points, then $M(P)\leq 3n-7$, consequently $M(P,\alpha)\geq M(P)$. Suppose that the convex hull of is a triangle. Then $p_0p_1$ is one side of the triangle, and all other points lie on one side of the line $p_0p_1$. In this case, we add edges $p_ip_{i+1}$ for all $i\in \{2,\ldots, n-3\}$, and the we obtain an $\alpha$-bend graph with $(2n-3)+(n-3)=3n-6=M(P)$ edges.
\end{proof}

For circular arcs, the number of edges goes down to 2 as $\alpha$ tends to zero.
\begin{proposition}\label{pro:small-angle-arcs}
For every finite $P\subset \mathbb{R}^2$ in general position, there exists a threshold $\alpha_0\in (0,\pi)$ such that, for every $\alpha\in (0,\alpha_0)$, we have
$M_a(P,\alpha)=1$ and $M_a^\|(P,\alpha)=2$.
\end{proposition}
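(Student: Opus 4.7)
The plan is to establish the tight upper bounds $M_a(P,\alpha)\le 1$ and $M_a^\|(P,\alpha)\le 2$; the matching lower bounds are immediate, as a single $\alpha$-arc gives $M_a(P,\alpha)\ge 1$, and by Proposition~\ref{pro:multiplicity} the two $\alpha$-arcs between any fixed pair, one in each halfplane bounded by their chord, are non-crossing and give $M_a^\|(P,\alpha)\ge 2$. Since Proposition~\ref{pro:multiplicity} also caps the multiplicity at two per pair, both upper bounds reduce to the following key claim: \emph{for $\alpha$ sufficiently small (depending on $P$), any two $\alpha$-arcs on distinct vertex pairs of $P$ must cross.}

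The claim rests on two facts about an $\alpha$-arc $\gamma$ between points $a,c$ as $\alpha\to 0$: its radius $r=|ac|/(2\sin\alpha)$ grows without bound, and by the tangent-chord angle formula the tangent to the long arc at $a$ makes angle $\pi-\alpha$ with $\vec{ac}$, so $\gamma$ tends to a full circle of huge radius that is tangent at $a$ and $c$ to the chord line $\ell_{ac}$ and lies in the halfplane of $\ell_{ac}$ containing the center of the underlying circle.

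Fix two $\alpha$-arcs $\gamma_1,\gamma_2$ on pairs $\{a_1,c_1\}\ne\{a_2,c_2\}$, with underlying circles $C_1,C_2$ and centers $O_1,O_2$. Consider first the case where the pairs share a vertex $v$. Apply a geometric inversion centered at $v$ with any fixed radius $k$: each $C_i$ maps to a line $\ell_i'$ perpendicular to $\vec{vO_i}$, meeting the ray $\vec{vO_i}$ at distance $k^2/(2r_i)$ from $v$. As $\alpha\to 0$, $\vec{vO_i}$ tends to the perpendicular of the chord $vc_i$, so $\ell_i'$ becomes essentially parallel to that chord line. Since $P$ is in general position, the two chord lines through $v$ have distinct directions; hence for all sufficiently small $\alpha$, $\ell_1'$ and $\ell_2'$ meet transversely at a unique finite point $q'\ne v$, and $q'$ lies on the $O_i$-side of each chord line through $v$. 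Inverting back, the image $q$ of $q'$ is the second intersection of $C_1$ and $C_2$; because inversion preserves rays from $v$, $q$ lies on the $O_i$-side of each chord line through $v$, and hence is an interior point of the long arc $\gamma_i$ for each $i$ (for $\alpha<\tfrac{\pi}{2}$ the $\alpha$-arc is the long arc on the center's side). Therefore $\gamma_1$ and $\gamma_2$ cross at $q$, and by general position $q$ is neither $c_1$ nor $c_2$.

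The disjoint-pair case follows the same template: within any fixed ball containing $P$, each $\gamma_i$ is, up to an $O(\alpha)$ perturbation, the union of the two rays of the chord line $\ell_{a_ic_i}$ lying outside the chord segment, so two such near-lines on distinct chord lines meet transversely outside their chord segments and force a crossing. The main obstacle I anticipate is the sub-case of (nearly) parallel chord lines $\ell_{a_1c_1}\parallel\ell_{a_2c_2}$, where the linear approximations do not meet within any bounded region; in that situation one must work with the circular structure directly and use the radical axis of $C_1,C_2$ to locate their intersection points far from $P$, then verify that at least one of these points lies in the interior of both long arcs for all small enough $\alpha$.
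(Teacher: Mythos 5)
Your shared-vertex case is a legitimate alternative route to the paper's argument (the paper instead locates a crossing of the two circles near the intersection point of the two chord lines and then uses the second intersection point), but even there the pivotal assertion that $q'$ lies on the $O_i$-side of both chord lines is only asserted: it is true because $q'$ lies within distance $O(\sin\alpha)$ of $v$, whereas $\ell_i'$ meets the chord line $vc_i$ at angle $\alpha$ while passing at distance $\Theta(\sin\alpha)$ from $v$, so it only crosses that chord line at distance $\Theta(1)$ from $v$; this quantitative comparison has to be made explicit.

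The genuine gap is the disjoint-pair case, and the parallel-chord sub-case you flag is not merely an obstacle to your method --- the key claim you reduce everything to is \emph{false} there under the usual reading of general position (no three collinear). Take $P=\{(0,0),(10,0),(0,1),(1,1)\}$ and, for small $\alpha$, draw the $\alpha$-arc on the pair $(0,0),(10,0)$ with its center above the chord, and the $\alpha$-arc on $(0,1),(1,1)$ also with its center above. The radii are $5/\sin\alpha$ and $1/(2\sin\alpha)$, while the distance between the centers is $(4.5/\sin\alpha)-1+o(1)$, strictly less than the difference $4.5/\sin\alpha$ of the radii; hence the small circle is nested strictly inside the large one, the two circles do not meet at all, and the two major arcs form a plane $\alpha$-arc graph with two edges on four distinct vertices. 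Your proposed repair via the radical axis therefore cannot succeed: there are no intersection points to locate. The statement survives only because the paper tacitly uses a stronger notion of general position in which every two of the $\binom{n}{2}$ connecting lines meet (this is also how general position is invoked in the proof of Proposition~\ref{pro:middle}); its proof takes a disk $D$ containing all these pairwise intersection points, chooses $\alpha_0$ so that inside $D$ each circle stays $\varepsilon$-close to its chord line, concludes that the two circles cross once inside $D$ near the intersection of the two chord lines and hence a second time at some $x$ outside $D$, and observes that $x$ lies on both major arcs because the complementary minor arcs stay inside $D$. You need either to import that assumption explicitly (which makes your parallel sub-case vacuous) or to accept that the claim fails; and in either case the non-parallel disjoint-pair case, which you only sketch, should be completed along the lines just described.
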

\begin{proof}
Let $P\subset\mathbb{R}^2$ be a set of $n$ points in general position, let $W$ be the set of intersection points of the $\binom{n}{2}$ lines spanned by $P$.
Let $\varepsilon>0$ be so small that the $\varepsilon$-radius disks centered at the points in $W$ are pairwise disjoint, and let $D$ be a large disk that contains the $\varepsilon$-disks centered at all points in $W$. Finally, let $\alpha_0$ be so small that for every point pair $\{p,q\}\subset P$, the intersection of disk $D$ and the circle containing an $\alpha_0$-arc edge is in the $\varepsilon$-neighborhood of the line spanned by $pq$. Note that the same property holds for $\alpha$-arc edges for all $\alpha\in (0,\alpha_0)$.

For the lower bound, it is clear that an $\alpha$-arc graph can contain any one $\alpha$-arc edge $pq$, and an $\alpha$-arc multigraph can contain any $\alpha$-arc double-edge $pq$.
For the upper bound, suppose to the contrary, that $\alpha\in (0,\alpha_0)$ and a plane $\alpha$-arc graph $G$ contains two $\alpha$-arc edges $p_1q_1$ and $p_2q_2$, where $\{p_1,q_1\}\neq \{p_2,q_2\}$. The circles containing the $\alpha$-arc edges $p_1q_1$ and $p_2q_2$ cross at some point in the $\varepsilon$-neighborhood of the intersection point of lines $p_1q_1$ and $p_2q_2$, which is in $D$. Consequently they also cross at another point, say $x$, outside of $D$. The point $x$ lies on both $\alpha$-arc edges, and so these edges cross, contradicting the assumption that $G$ is a plane graph.
\end{proof}
A combination of Propositions~\ref{pro:collinear} and~\ref{pro:small-angle-arcs} yields the following.
\begin{corollary}\label{cor:small-angle-arcs}
For every finite $P\subset \mathbb{R}^2$, where the maximum number of collinear points is $k$, there exists a threshold $\alpha_0\in (0,\pi)$ such that, for every $\alpha\in (0,\alpha_0)$, we have $M_a(P,\alpha)=2k-3$ and $M_a^\|(P,\alpha)=2k-2$.
\end{corollary}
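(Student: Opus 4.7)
My plan is to combine Propositions~\ref{pro:collinear} and~\ref{pro:small-angle-arcs}: the former supplies a construction achieving $2k-3$ or $2k-2$ arcs on any set of $k$ collinear points, and the latter, suitably generalized, forces every plane $\alpha$-arc multigraph on $P$ to have all its edges between points of a single line once $\alpha$ is small enough.

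For the lower bound I would pick a line $L$ with $|L\cap P|=k$ and apply the Proposition~\ref{pro:collinear} construction to the $k$ points of $P\cap L$, producing $2k-3$ arc edges in the simple case and $2k-2$ in the multigraph case, valid for any $\alpha\in(0,\tfrac{\pi}{2})$. The only subtlety is that a constructed arc might accidentally pass through some point $v\in P\setminus L$; but for each triple $(p_i,p_j,v)$ with $p_i,p_j\in P\cap L$ and $v\notin L$ there is a unique circle through the three, hence at most one bad value of $\alpha$ for which the corresponding arc would hit $v$. Since $P$ is finite, I can choose $\alpha$ small and missing all of the finitely many bad values.

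For the upper bound the key claim is that, for $\alpha$ sufficiently small depending on $P$, any two edges of a plane $\alpha$-arc multigraph on $P$ share the same defining line (the line through their two endpoints). Granting this, every edge lies between points of a single line $\ell$ with $|P\cap\ell|\leq k$, and Proposition~\ref{pro:collinear} bounds the edge count by $2k-2$, or $2k-3$ in the simple case. To prove the claim I would extend the argument of Proposition~\ref{pro:small-angle-arcs}. Let $\mathcal{L}$ be the finite set of lines spanned by pairs of points of $P$, let $W$ be their pairwise intersection points, pick $\varepsilon>0$ so that the $\varepsilon$-disks about points of $W$ are pairwise disjoint, and let $D$ be a disk containing $P$ together with those $\varepsilon$-disks. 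For small $\alpha$ every $\alpha$-arc has radius $r=d/(2\sin\alpha)\sim 1/\alpha$, while the portion of its underlying circle \emph{not} covered by the arc is a sub-arc of central angle $2\alpha$ confined to an $\varepsilon$-neighborhood of its chord; hence outside $D$ each $\alpha$-arc coincides with its full underlying circle. Now if two edges $e_1,e_2$ had distinct defining lines $\ell_1,\ell_2$, then either $\ell_1\cap\ell_2\in W$ and the second intersection of the underlying circles (the reflection of the near-$W$ intersection across the line of centers $O_1O_2$, which sits at distance $\sim 1/\alpha$ from $D$) is outside $D$, or $\ell_1\parallel\ell_2$ and a direct asymptotic comparison of $|O_1O_2|$ with $2r$ via $r\sin\alpha=d/2$ places both circle intersections at distance $\sim 1/\alpha$ from the chords, again outside $D$. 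Either way this second intersection lies on both arcs, so $e_1$ and $e_2$ cross, contradicting planarity.

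The main obstacle is the parallel-lines case: Proposition~\ref{pro:small-angle-arcs} tacitly sidesteps it via its generic-position hypothesis, but it must be handled here because $P$ may support many collinear points on several parallel lines. The required geometric estimate reduces to a short asymptotic computation using $r\sin\alpha=d/2$, and care is needed to verify that the resulting threshold $\alpha_0$ can be chosen uniformly over all pairs of chord lines spanned by $P$, so that the same $\alpha_0$ works simultaneously for the upper-bound argument and for avoiding the bad values in the lower-bound construction.
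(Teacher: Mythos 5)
Your lower bound and the intersecting‑lines case of your upper bound are sound, and you have correctly isolated the crux: the paper's proof is the single sentence ``a combination of Propositions~\ref{pro:collinear} and~\ref{pro:small-angle-arcs}'', and the parallel‑chord case is exactly what that combination does not cover (the proof of Proposition~\ref{pro:small-angle-arcs} already speaks of ``the intersection point of lines $p_1q_1$ and $p_2q_2$''). But your proposed resolution of the parallel case fails. Two circles intersect only if \emph{both} $|O_1O_2|\leq r_1+r_2$ and $|O_1O_2|\geq |r_1-r_2|$ hold; your asymptotic comparison of $|O_1O_2|$ with $2r$ addresses only the first inequality. Take two parallel chords of lengths $d_1>d_2$ on lines at distance $h>0$, with both arcs bulging toward the side containing $\ell_2$. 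Then $r_i=d_i/(2\sin\alpha)$, the centers lie at heights $r_1\cos\alpha$ and $h+r_2\cos\alpha$ above $\ell_1$, and $(r_1-r_2)-|O_1O_2| \to (r_1-r_2)(1-\cos\alpha)+h \to h>0$ as $\alpha\to 0$ (the term $(r_1-r_2)(1-\cos\alpha)$ vanishes because $(r_1-r_2)\sin\alpha=(d_1-d_2)/2$ is constant). So $|O_1O_2|<r_1-r_2$: the smaller circle is nested strictly inside the larger disk, the two $\alpha$-arcs are disjoint, and no crossing is forced.

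This is not a repairable gap but a counterexample to the statement itself. For $P=\{(-1,0),(1,0),(-0.9,1),(0.9,1)\}$, which is in general position so $k=2$, the two upward‑bulging $\alpha$-arcs on the chords $(-1,0)(1,0)$ and $(-0.9,1)(0.9,1)$ lie on nested circles for all sufficiently small $\alpha$, and neither arc passes through the other pair of vertices; hence they form a plane $\alpha$-arc graph with two edges and $M_a(P,\alpha)\geq 2>1=2k-3$. Stacking more nested parallel chords pushes $M_a(P,\alpha)$ up to $\Omega(n)$ while $k$ stays $2$. So your key claim that any two edges must share a defining line is false, the upper bound of Corollary~\ref{cor:small-angle-arcs} fails for such $P$, and the same example shows that Proposition~\ref{pro:small-angle-arcs} needs a position assumption excluding parallel spanned lines (ordinary general position, i.e.\ no three collinear points, does not suffice). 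The honest conclusion of your analysis of the parallel case is that it refutes the claim rather than completes the proof.
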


\section{The Size of $\alpha$-Bend and $\alpha$-Arc Multigraphs}
\label{sec:2pi3}

Let $G$ be an $\alpha$-bend multigraph. The union of two parallel edges between vertices $a$ and $b$ is a closed curve that contains the line segment $ab$ in its interior by Proposition~\ref{pro:multiplicity}. Let $\overline{G}$ denote the straight-line graph comprising a straight-line edge for each double edge in $G$.
A \emph{lens} of $G$ is the interior of the closed curve formed by two parallel edges. A lens is \emph{empty}
if it does not contain any vertex of $G$. Note that if all double edges in $G$ form empty lenses, then $\overline{G}$ is a planar straight-line graph.

\begin{lemma}\label{lem:cycle}
For $\alpha\in (0,\frac{2\pi}{3}]$, let $G$ be an $\alpha$-bend multigraph where every double edge is an empty lens, and let $C=(p_1,\ldots ,p_m)$ be a cycle of double edges. Then the edges of $G$ in the (straight-line) polygon $\overline{C}$ do not contain a triangulation of $C$.  (In particular, $G$ does not contain any 3-cycle of double edges, and there is no diagonal in a 4-cycle of double edges.)
\end{lemma}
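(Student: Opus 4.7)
I would argue by contradiction. Suppose the edges of $G$ inside $\overline{C}$ do contain a triangulation $T$ of the region bounded by $C$. Then $T$ has $m-2$ triangular faces, and its $2m-3$ edges are the $m$ ``interior'' $\alpha$-bend polylines of the double edges of $C$ (one per double edge, namely the polyline whose bend lies inside $\overline{C}$) together with $m-3$ additional $\alpha$-bend edges of $G$ used as diagonals. Each face $F$ of $T$ is then a curvilinear hexagon whose six corners alternate between three vertices of $V(C)$ and three bend points, so its interior-angle sum is $(6-2)\pi=4\pi$.

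The plan is to combine a face-by-face angle bound with a global double count of diagonals. First I would identify, at every bend corner of $F$, which of the two openings $\alpha$ or $2\pi-\alpha$ is the interior angle of $F$. At a cycle-edge bend $b$ on the boundary of $F$, Proposition~\ref{pro:multiplicity} and the empty-lens hypothesis put the chord $p_ip_{i+1}$ on the acute $(\alpha)$ side of $b$; the face $F$ therefore sits on the opposite, reflex side and contributes interior angle $2\pi-\alpha$ at this bend. At a diagonal-edge bend the two adjacent faces split the $2\pi$ around the bend as $\alpha$ and $2\pi-\alpha$. Introducing $k_d(F)\in\{0,1,2,3\}$ for the number of diagonal edges of $F$ whose acute side is contained in $F$, a short arithmetic on the bend angles then yields
\[
\sum_{v\in V(F)}\gamma^F_v\;=\;4\pi-\bigl[(3-k_d(F))(2\pi-\alpha)+k_d(F)\alpha\bigr]\;=\;-2\pi+3\alpha+2k_d(F)(\pi-\alpha).
\]

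Next I would observe that each vertex angle $\gamma^F_v$ is strictly positive, because two distinct $\alpha$-bend edges incident to a common vertex cannot leave it in the same tangent direction without overlapping initial segments, which the plane multigraph hypothesis forbids. Hence the displayed right-hand side is strictly positive, which for $\alpha\in(0,2\pi/3]$ forces $k_d(F)>(2\pi-3\alpha)/(2(\pi-\alpha))\in[0,1)$ and therefore $k_d(F)\ge 1$ for every face of $T$.

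To conclude, I would double count: each of the $m-3$ diagonals of $T$ has its acute side inside exactly one of its two adjacent faces, so $\sum_F k_d(F)=m-3$; but summing the face-by-face bound over the $m-2$ faces gives $\sum_F k_d(F)\ge m-2$. The resulting $m-2\le m-3$ is the desired contradiction. The main obstacle I expect is a clean justification of the side-identifications at the cycle-edge bends, because it rests on the empty-lens hypothesis and on the correct choice of which of the two parallel polylines of each double edge to use on the boundary of $T$; once that is pinned down, the remainder is elementary angle arithmetic and counting.
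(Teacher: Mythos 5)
Your proof is correct and is essentially the paper's argument: the core computation---each triangular face of the curvilinear triangulation is a hexagon with interior-angle sum $4\pi$, a boundary or reflex-side bend contributes $2\pi-\alpha$, and $3(2\pi-\alpha)\geq 4\pi$ for $\alpha\leq\frac{2\pi}{3}$ together with positivity of the vertex angles forces every face to have at least one diagonal whose acute wedge opens into it---is exactly the paper's sink-triangle computation. Your concluding double count $\sum_F k_d(F)=m-3<m-2\leq\sum_F k_d(F)$ is precisely the statement that the paper's orientation of the dual tree (each dual edge directed toward the face containing the bend point) has a sink, i.e., a face with $k_d(F)=0$; the two write-ups differ only in whether the angle bound is applied once at that sink or uniformly to all $m-2$ faces.
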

\begin{proof}
Suppose, for contradiction, that $C=(p_1,\ldots ,p_m)$ is a cycle of double edges and the edges of $G$ in  the (straight-line) polygon $\overline{C}$ form a triangulation of $C$; see Fig.~\ref{fig:5}. Let $pq$ be an edge in $C$. Then two closed curves pass through $p$ and $q$: An empty lens formed by two $\alpha$-bend edges, and the straight-line cycle $\overline{C}$. These curves can cross only at $p$ and $q$. The straight-line segment $pq$ is in the interior of the empty lens, and all vertices of $\overline{C}$ are in its exterior. Consequently, one of the $\alpha$-bend edges between $p$ and $q$ lies in the interior of $\overline{C}$.

\begin{figure}
\centering
\includegraphics[width=0.85\textwidth]{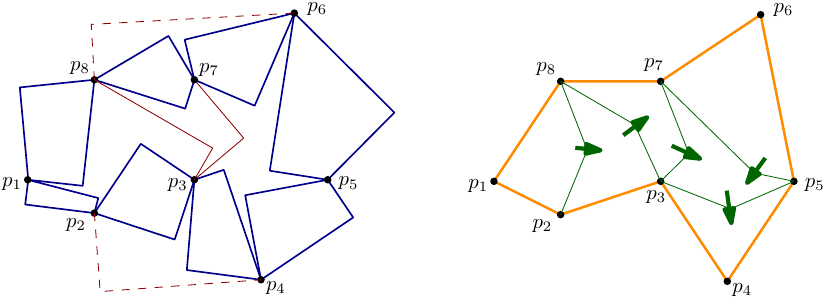}
\caption{Left: a cycle of double edges $C=(p_1,\ldots ,p_m)$, with two internal and two external diagonals.
Right: The corresponding polygon $\overline{C}$, a triangulation of $\overline{C}$ with one-bend edges, and the directions of the dual edges.} \label{fig:5}
\end{figure}

Let $T$ be a dual graph of the triangulation of $\overline{C}$, where the nodes correspond to triangles, and two nodes are adjacent if the corresponding triangles share an edge. The dual graph is a tree $T$. We define an orientation on the edges of $T$ as follows. Consider two adjacent triangles, $t_1$ and $t_2$, that share an edge $p_ip_j$. Direct the edge as $(t_1,t_2)$ if and only if the $\alpha$-bend edge $p_ip_j$ and triangle $t_2$ are on the same halfplane of line $p_ip_j$. Every directed tree contains a sink. Let $t=\Delta(p_i p_j p_k)$ be a sink.

From the discussion above, the straight-line triangle $\Delta(p_i p_j p_k)$ contains three $\alpha$-bend edges: $p_ip_j$, $p_j p_k$, and $p_k p_i$. The concatenation of these edges is a simple hexagon $H$ (formed by three vertices and three bend points). At every bend point the interior angle of $H$ is $2\pi-\alpha> 2\pi-\frac{2\pi}{3}=\frac{4\pi}{3}$. The sum of these three interior angles is greater than $4\pi$. The sum of all interior angles of a hexagon, however, is at most $(6-2)\pi = 4\pi$: a contradiction.
\end{proof}

\begin{lemma}\label{lem:empty}
For $\alpha\in (0,\frac{2\pi}{3}]$, let $G$ be an $\alpha$-bend multigraph on $n\geq 3$ vertices where every double edge is an empty lens. Then $G$ has at most $4n-7$ edges.
\end{lemma}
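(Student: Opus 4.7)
The strategy is to replace each double edge of $G$ with the straight line segment joining its endpoints, yielding a simple plane graph $H$ on the same $n$ vertices. Writing $D$ for the number of vertex pairs joined by a double edge and $s$ for the number of single edges, so that $m=s+2D$, the graph $H$ has $D+s$ edges. Planarity of $H$ should be checked first: two distinct empty lenses are disjoint regions (a nested configuration would force two endpoints of the inner double edge to lie on the boundary of the outer lens, contradicting emptiness together with the two-parallel-edges-per-pair bound of Proposition~\ref{pro:multiplicity}), so the straight segments (each lying inside its own empty lens) are pairwise disjoint; and any single edge of $G$ avoids every empty lens, since otherwise it would enter the lens but could not cross its boundary without violating the planarity of $G$. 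In particular, the subgraph $\overline{G}\subseteq H$ consisting of the $D$ straight-line segments is itself a planar straight-line graph.

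I would then distinguish two cases. If $\overline{G}$ is a forest, then $D\leq n-1$, and combining this with the planar-graph bound $D+s\leq 3n-6$ applied to the simple plane graph $H$ yields $m=(D+s)+D\leq (3n-6)+(n-1)=4n-7$ directly. Otherwise $\overline{G}$ contains a cycle, so its cyclomatic number $b\geq 1$ equals the number of bounded faces of its planar embedding. The boundary of each bounded face $F^{*}$ of $\overline{G}$ is a cycle $C=(p_1,\ldots,p_k)$ of double edges whose straight-line polygon $\overline{C}$ equals $\overline{F^{*}}$. If every face of $H$ lying inside $F^{*}$ were a triangle, then replacing each straight-line boundary segment of those triangles with the inner $\alpha$-bend curve of the corresponding double edge would exhibit a triangulation of $\overline{C}$ by $\alpha$-bend edges of $G$, contradicting Lemma~\ref{lem:cycle}. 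Hence each bounded face of $\overline{G}$ contains at least one face of $H$ of size $\geq 4$, and since distinct bounded faces of $\overline{G}$ are disjoint regions, the number $F_{\geq 4}$ of faces of $H$ of size at least $4$ satisfies $F_{\geq 4}\geq b=D-n+c_{\overline{G}}$, where $c_{\overline{G}}\geq 1$ is the number of connected components of $\overline{G}$ (isolated vertices counted).

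Finally, Euler's formula applied to $H$ (assuming $G$ is connected; disconnectedness only strengthens the inequality) gives $f_H=D+s-n+2$. Writing $T$ for the number of triangular faces of $H$, the face-size identity $\sum_F|F|=2(D+s)$ together with $|F|\geq 3$ for triangles and $|F|\geq 4$ for all other faces yields $2(D+s)\geq 3T+4F_{\geq 4}=4f_H-T$, hence $T\geq 2(D+s)-4n+8$. On the other hand, $T=f_H-F_{\geq 4}\leq (D+s-n+2)-(D-n+c_{\overline{G}})=s+2-c_{\overline{G}}$. Combining the two bounds on $T$ gives $2D+s\leq 4n-6-c_{\overline{G}}\leq 4n-7$, which is the desired bound. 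The main technical difficulty I anticipate is the correspondence invoked in the second paragraph: verifying that an all-triangular subdivision of $F^{*}$ by faces of $H$ truly translates, upon substituting the inner $\alpha$-bend curve for each straight boundary segment, into a combinatorial triangulation of $\overline{C}$ by $\alpha$-bend edges of $G$ in the sense that Lemma~\ref{lem:cycle} forbids.
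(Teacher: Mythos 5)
Your proposal is correct and takes essentially the same route as the paper: the same reduction to the straight-line graph $\overline{G}$ of double edges, the same split into the forest case and the cyclic case, and the same use of Lemma~\ref{lem:cycle} to place a face of size at least four inside each bounded face of $\overline{G}$. The only difference is bookkeeping --- you close the cyclic case by a face-size double count on $H$ with explicit component tracking, where the paper completes the underlying simple graph to a triangulation and bounds it by $3n-6$; the delicate translation step you flag at the end is present (and equally implicit) in the paper's own argument.
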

\begin{proof}
If the double edges do not form any cycle, then there are at most $n-1$ double edges, hence $G$ has at most $(3n-6)+(n-1)=4n-7$ edges, as required. Assume now that $G$ contains a cycle of double edges.

Consider the graph $\overline{G}$ (where each straight-line edge represents a double edge in $G$). By Lemma~\ref{lem:cycle}, $\overline{G}$ is triangle-free. By Euler's polyhedron formula, if $\overline{G}$ has $f$ bounded faces, it has $n+f-1$ edges. That is, $G$ has $n+f-1$ double edges. By Lemma~\ref{lem:cycle}, each bounded face of $\overline{G}$ contains a face of $G$ with 4 or more vertices. By replacing the $n+f-1$ double edges with single edges, and triangulating each face that has 4 or more vertices, we obtain a triangulation $T$. Note that $T$ may have parallel edges in the interior and exterior of a face $f$, but parallel edges cannot form an empty lens, and so $T$ has at most $3n-6$ edges. Consequently, $|E(G)| -(n+f-1)+ f \leq 3n-6$, which yields $|E(G)|\leq 4n-7$, as claimed.
\end{proof}

\begin{theorem}\label{thm:recursion}
For $\alpha\in (0,\frac{2\pi}{3}]$ and $n\geq 2$, we have $M_b^\|(n,\alpha)\leq 4n-6$.
\end{theorem}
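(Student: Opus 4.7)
The plan is to prove the bound by strong induction on $n$. The base case $n=2$ follows directly from Proposition~\ref{pro:multiplicity}: at most $2=4n-6$ edges can join two vertices.

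For the inductive step ($n\geq 3$), first observe that if every double edge of $G$ forms an empty lens, Lemma~\ref{lem:empty} already yields $|E(G)|\leq 4n-7<4n-6$. So we may assume some double edge has a non-empty lens, and choose a double edge $ab$ whose lens $L$ is \emph{innermost non-empty}---i.e., $L$ contains at least one vertex of $G$ but properly contains no other non-empty lens of $G$. Such an $L$ exists because the lenses of $G$ form a finite poset under inclusion. The boundary $\partial L$ is a Jordan curve formed by the two $\alpha$-bend edges joining $a$ and $b$, so by planarity every other edge of $G$ lies entirely in $\overline{L}$ or entirely in the closed exterior $\overline{L^c}$, and every vertex of $V\setminus\{a,b\}$ is strictly inside or strictly outside $L$.

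Let $G_1$ and $G_2$ be the sub-multigraphs of $G$ induced on the vertex sets $V_1:=V\cap\overline{L}$ and $V_2:=V\cap\overline{L^c}$, together with the edges of $G$ contained in $\overline{L}$ and $\overline{L^c}$ respectively. Setting $n_i:=|V_i|$, we have $V_1\cap V_2=\{a,b\}$, $n_1+n_2=n+2$, $n_1\geq 3$ (since $L$ is non-empty), and $|E(G)|=|E(G_1)|+|E(G_2)|-2$, because the two $ab$-edges are counted once in each $E(G_i)$. Let $G_1'$ be obtained from $G_1$ by deleting one of the two $ab$-edges, so $\{a,b\}$ is no longer a double edge in $G_1'$. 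Any remaining double edge $cd\neq ab$ of $G_1'$ lies inside $\overline{L}$, so its lens is strictly contained in $L$; by minimality of $L$ this sub-lens must be empty. Hence every double edge of $G_1'$ is empty, and Lemma~\ref{lem:empty} gives $|E(G_1')|\leq 4n_1-7$, so $|E(G_1)|\leq 4n_1-6$. Since $n_1\geq 3$ implies $n_2\leq n-1<n$, the inductive hypothesis (or the base case when $n_2=2$, where the only edges of $G_2$ are the two $ab$-edges) yields $|E(G_2)|\leq 4n_2-6$. Summing,
\[
|E(G)|\leq (4n_1-6)+(4n_2-6)-2 = 4(n+2)-14 = 4n-6.
\]

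The main obstacle is verifying that the innermost choice of $L$ really forces the hypothesis of Lemma~\ref{lem:empty} on $G_1'$. This depends on the topological observation that for any double edge $cd\neq ab$ lying in $\overline{L}$, \emph{both} of its $\alpha$-bend copies are trapped inside $\overline{L}$ (since $\partial L$ is a Jordan curve that no other edge can cross), so its lens is a proper subregion of $L$ and must be empty by minimality. Once this is in hand, the induction combines cleanly and the arithmetic $4(n+2)-14=4n-6$ absorbs exactly the $-2$ from double-counting the boundary edges.
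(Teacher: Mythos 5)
Your proof is correct and follows essentially the same route as the paper: induction on $n$, with Lemma~\ref{lem:empty} handling the all-lenses-empty case and the nested (laminar) structure of lenses driving the recursion. The only difference is organizational --- you split at a single innermost non-empty lens and recurse on the exterior while applying Lemma~\ref{lem:empty} inside, whereas the paper deletes all lens-interior vertices at once (applying Lemma~\ref{lem:empty} to the resulting graph) and recurses into each maximal lens; the bookkeeping $(4n_1-6)+(4n_2-6)-2=4n-6$ matches the paper's count exactly.
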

\begin{proof}
We proceed by induction on $n$. The base case $n=2$ trivially follows from Proposition~\ref{pro:multiplicity}.
Let $G$ be an $\alpha$-bend graph on $n>2$ vertices, and assume that the theorem holds for every subgraph of $G$ on fewer than $n$ vertices. If all double edges are empty-lenses, then $|E(G)|\leq 4n-7$ by Lemma~\ref{lem:empty}. Otherwise, delete all vertices that lie in the interior of a lens (i.e., a cycle formed by double edges), and let $G'$ be the induced subgraph of the remaining $n'\geq 2$ vertices. Then $G'$ has at most $4n'-7$ edges by Lemma~\ref{lem:empty} if $n'\geq 3$; and at most $2=4n'-6$ by Proposition~\ref{pro:multiplicity} if $n'=2$. In both cases, $G'$ has at most $4n'-6$ edges.

Note that the lenses of a plane multigraph form a laminar system (i.e, two lenses are either disjoint or one contains the other), and so the maximal lenses of $G$ are pairwise disjoint.
If $L_1,\ldots , L_k$ are the maximal lenses that contain $n_1,\ldots n_k$ vertices in their interior, then
$n=n'+\sum_{i=1}^k n_i$. For $i=1,\ldots , k$ let $G_i$ denote the subgraph induced by all vertices inside and on the boundary of $L_i$. The boundary of the lens contains two vertices and two (parallel) edges. In particular $G_i$ has $n_i+2$ vertices.
It has at most $4(n_i+2)-6= 4n_i+2$ edges by induction, but two of these edges are already included in $G'$.
Consequently, $|E(G)|= |E(G')|+\sum_{i=1}^k 4n_i \leq (4n'- 6) + 4\sum_{i=1}^k n_i = 4n-6$, as claimed.
\end{proof}

When $\alpha\in (\frac{2\pi}{3},\pi)$, then $M_b^\|(n,\alpha)$ is close to the trivial upper bound of $6n-12$, established by Proposition~\ref{pro:multiplicity}.
\begin{proposition}\label{pro:grid}
For $\alpha\in (\frac{2\pi}{3},\pi)$, we have $M_b^\|(n,\alpha)\geq 6n-O(\sqrt{n})$.
\end{proposition}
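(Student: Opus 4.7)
My plan is to construct an explicit $\alpha$-bend multigraph on a set $P$ of $n$ points arranged in the equilateral triangular lattice. I take $P$ to be $n$ lattice points in a large triangular (or hexagonal) region so that the convex hull has $h = O(\sqrt{n})$ vertices and the standard lattice triangulation consists of $T = 2n - h - 2 = 2n - O(\sqrt{n})$ equilateral triangular faces. Inside each face I will place three $\alpha$-bend edges, one per side, each drawn on the interior side of that face; this doubles every interior edge of the triangulation and puts one copy on each boundary edge, for a total of $3T = 6n - 3h - 6 = 6n - O(\sqrt{n})$ $\alpha$-bend edges, matching the required bound.

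The critical step is to choose the bend points so that the three polylines in each face are pairwise non-crossing. I orient each face counterclockwise, and for each directed side $(u,v)$ I place the bend point $b$ on the interior arc of chord $uv$ at distance $\varepsilon$ from $u$ for a small $\varepsilon>0$. The polyline $(u,b,v)$ then consists of a short ``spike'' of length $\varepsilon$ leaving $u$ into the face at angle $\pi-\alpha$ from $uv$, followed by a long segment from $b$ to $v$ lying within $O(\varepsilon)$ of the chord. Since $\alpha>2\pi/3$, the spike angle satisfies $\pi-\alpha<\pi/3$, which is strictly less than the interior angle $\pi/3$ of the equilateral face at $u$; thus the spike lies strictly inside the vertex wedge at $u$, with positive angular gap $\alpha-2\pi/3$ to the other chord incident to $u$ (which the ``non-wiggling'' polyline in that face follows near $u$). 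The three long segments lie in $O(\varepsilon)$-neighborhoods of the three sides of the face (which pairwise meet only at vertices), and the three spikes occupy distinct vertex wedges, so a small enough $\varepsilon$ makes the three polylines pairwise non-crossing inside the face.

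Non-crossing between polylines in distinct faces is automatic: each polyline lies inside the closed triangle of its face, touching the boundary only at its two endpoints. Two faces either share a chord---in which case their polylines lie in opposite halfplanes bounded by the chord---or share at most a single vertex, in which case their polylines meet only as common endpoints. The main technical subtlety, and the reason the threshold $\alpha>2\pi/3$ is sharp, is that for $\alpha\in(2\pi/3,5\pi/6)$ the three full $\alpha$-arcs on the interior sides of an equilateral face \emph{do} cross pairwise in the interior (at $\alpha=2\pi/3$ all three arcs are concurrent at the centroid), so one cannot simply use arcs. The proof exploits the freedom to place the bend point anywhere on the arc, choosing bends near vertices so each polyline essentially hugs its chord except for a short spike at one endpoint; the inequality $\pi-\alpha<\pi/3$, equivalent to $\alpha>2\pi/3$, is precisely the condition that makes those spikes fit inside the $\pi/3$ interior-angle wedges of the equilateral triangles.
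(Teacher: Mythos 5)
Your construction is correct and matches the paper's: both take $n$ points in a section of the triangular grid and place three pairwise noncrossing $\alpha$-bend edges inside each equilateral face (equivalently, an empty lens on each unit edge), using $\pi-\alpha<\frac{\pi}{3}$ to fit them, for a total of $6n-O(\sqrt{n})$ edges. The only difference is a detail of bend placement---the paper uses the symmetric bend point with equal-length segments, while you push the bend toward one endpoint---and your explicit angular-gap argument at the vertices is a more careful justification of the paper's one-line claim that the three edges in a face are crossing-free.
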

\begin{proof}
Let $n$ points be arranged in a section of a triangular grid. The unit-length edges form a plane graph, with $3n-O(\sqrt{n})$ edges,
where all bounded faces are equilateral triangles. We can replace each unit-length edge by two $\alpha$-bend edges, which form an empty lens, such that the two segments of each edge have equal length. Each equilateral triangle contains three $\alpha$-bend edges, which are crossing-free. This yields a plane $\alpha$-bend multigraph with $6n-O(\sqrt{n})$ edges.
\end{proof}

For $\alpha$-arc multigraphs, Lemma~\ref{lem:empty}, Theorem~\ref{thm:recursion}, and Proposition~\ref{pro:grid} carry over with essentially the same proof, but the angle threshold increases from $\alpha\leq \frac{2\pi}{3}$ to $\alpha\leq \frac{5\pi}{6}$, as a triangle $\Delta(p_1 p_2 p_3)$ cannot contain three $\alpha$-arc edges between its vertices for $\alpha\leq \frac{5\pi}{6}$. We summarize the result and omit the details.

\begin{theorem}\label{thm:arc}
For $\alpha\in (0,\frac{5\pi}{6}]$ and $n\geq 2$, we have $M_a^\|(n,\alpha)\leq 4n-6$.
For $\alpha\in (\frac{5\pi}{6},\pi)$, we have $M_a^\|(n,\alpha)\geq 6n-O(\sqrt{n})$.
\end{theorem}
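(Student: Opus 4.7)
The plan is to transfer the three-step argument of Section~\ref{sec:2pi3}---Lemma~\ref{lem:cycle}, Lemma~\ref{lem:empty}, and Theorem~\ref{thm:recursion}---together with the triangular-grid construction of Proposition~\ref{pro:grid}, from one-bend polylines to circular arcs. Every ingredient except one transfers by a line-by-line translation: neither the Euler-based counting of double-edge cycles nor the lens-based induction uses the piecewise-linear nature of bend edges. The single piece that must be reworked is the geometric obstruction inside Lemma~\ref{lem:cycle}; for bends this was the hexagon interior-angle count that cut off at $\alpha=\frac{2\pi}{3}$, whereas for arcs a tangent--chord computation will move the cutoff to $\alpha=\frac{5\pi}{6}$.

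The pivotal geometric claim to establish is that a straight triangle $\Delta(p_1p_2p_3)$ cannot contain three pairwise non-crossing $\alpha$-arc edges between its vertices whenever $\alpha\in(0,\frac{5\pi}{6}]$. To prove it, I use the tangent--chord relation for an arc of central angle $2(\pi-\alpha)$: at either endpoint, the tangent to the arc meets its chord at angle $\pi-\alpha$ on the arc's own side, equal to the inscribed angle in the alternate segment. Now suppose three $\alpha$-arcs $e_{12},e_{23},e_{31}$ all bulge into $\Delta(p_1p_2p_3)$ without mutual crossings. At a corner $p_i$ the tangents of $e_{ij}$ and $e_{ik}$ are rotated inward from their respective chords by $\pi-\alpha$ each, and for the two arcs to leave $p_i$ in the correct angular order these two rotations must collectively fit inside the corner angle $\gamma_i$, giving $\gamma_i\geq 2(\pi-\alpha)$ for $i=1,2,3$. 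Summing the three inequalities yields $\pi=\gamma_1+\gamma_2+\gamma_3\geq 6(\pi-\alpha)$, so $\alpha\geq\frac{5\pi}{6}$. The contrapositive gives the desired claim.

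With this, the arc-version of Lemma~\ref{lem:cycle} is immediate: orient the dual tree of a putative triangulation of $\overline{C}$ as in the bend case, and note that the sink triangle would host three $\alpha$-arcs in its interior, contradicting the previous paragraph. The arc-analogs of Lemma~\ref{lem:empty} and Theorem~\ref{thm:recursion} then go through verbatim, giving $M_a^\|(n,\alpha)\leq 4n-6$ for $\alpha\in(0,\frac{5\pi}{6}]$. For the lower bound when $\alpha\in(\frac{5\pi}{6},\pi)$, I mimic Proposition~\ref{pro:grid}: arrange $n$ points in a section of the triangular grid, double each of the $3n-O(\sqrt{n})$ unit-length edges by $\alpha$-arcs on either side of the corresponding chord to form empty lenses, and verify via the same tangent--chord computation ($\gamma_i=\frac{\pi}{3}>2(\pi-\alpha)$ precisely when $\alpha>\frac{5\pi}{6}$) that the three arcs bulging into any equilateral face are pairwise crossing-free. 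The resulting plane $\alpha$-arc multigraph has $6n-O(\sqrt{n})$ edges. The main hurdle throughout is the tangent--chord calculation of the second paragraph; once the threshold $\frac{5\pi}{6}$ has been correctly identified, every other step is a mechanical port of the bend case.
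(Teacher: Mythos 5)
Your overall architecture is exactly the paper's: the paper proves Theorem~\ref{thm:arc} by asserting that Lemma~\ref{lem:empty}, Theorem~\ref{thm:recursion}, and Proposition~\ref{pro:grid} carry over verbatim once one knows that a triangle cannot contain three $\alpha$-arc edges between its vertices, and it omits all details of that geometric claim. Your tangent--chord computation is the right way to supply the missing ingredient, and the necessary condition you derive is correct: if two arcs bulging into the triangle leave a corner of angle $\gamma_i$ without crossing, then $\gamma_i\geq 2(\pi-\alpha)$ (to make this airtight you should add the Jordan-curve step showing that a reversed angular order at $p_i$ forces $e_{ik}$ to enter and then exit the lune bounded by $e_{ij}$ and its chord, hence to cross $e_{ij}$).

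There is, however, a genuine gap at the endpoint, which your phrase ``the contrapositive gives the desired claim'' quietly skips over. Summing $\gamma_i\geq 2(\pi-\alpha)$ gives $\alpha\geq\frac{5\pi}{6}$, whose contrapositive rules out three arcs only for $\alpha<\frac{5\pi}{6}$, not for $\alpha\leq\frac{5\pi}{6}$. And the boundary case is not a removable technicality: for $\alpha=\frac{5\pi}{6}$ the arcs have central angle $\frac{\pi}{3}$, so over a unit equilateral triangle each supporting circle has radius $1$; at a shared corner the two circles have centers at distance $2$ and are \emph{externally tangent} there, so the two arcs meet only at that vertex. Unlike the bend case at $\alpha=\frac{2\pi}{3}$ (where a zero hexagon angle would force two segments to overlap), three pairwise noncrossing $\frac{5\pi}{6}$-arcs do fit inside an equilateral triangle, and your own grid construction then yields $M_a^\|(n,\frac{5\pi}{6})\geq 6n-O(\sqrt{n})$, contradicting the claimed upper bound $4n-6$ at that endpoint. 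So your argument (like the paper's sketch) establishes the upper bound only for $\alpha\in(0,\frac{5\pi}{6})$, and the statement itself needs the endpoint moved to the lower-bound regime. A second, smaller point: in the lower bound you use the corner-angle inequality as if it were \emph{sufficient} for noncrossing; you should also check that the second intersection point of the two supporting circles at a shared grid vertex lies off both arcs (it does --- by symmetry it lies on the angle bisector on the ray opposite the triangle, at parameter $t=\frac{\sqrt{3}}{2}-h<0$), since two arcs can satisfy the tangent--chord condition at the common endpoint and still cross elsewhere.
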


\section{Applications to $\alpha AC_1^=$ Graphs}
\label{sec:AC}

Ackerman et al.~\cite{AckermanFT12} proved that every $n$-vertex $\alpha AC_1^=$ graph has at most $27n$ edges, for every $\alpha\in (0,\frac{\pi}{2}]$. More precisely, they proved an upper bound of $24.5n$ if $\alpha\neq \frac{\pi}{3}$ and $27n$ if $\alpha=\frac{\pi}{3}$. We improve these bounds to $18.5n$ and $21n$, resp., using the same general strategy combined with Theorem~\ref{thm:recursion} from Section~\ref{sec:2pi3}. For $\alpha=\frac{\pi}{2}$ (i.e., for RAC$_1$-graphs), however, an asymptotically tight bound of $5.5n-O(1)$ is known~\cite{AngeliniBFK20}.
We recall Lemma~2.1 from~\cite{AckermanFT12}.

\begin{lemma}[Ackerman et al.~\cite{AckermanFT12}]\label{lem:segments-old}
Let $\alpha\in (0,\frac{\pi}{2}]$ and let $S$ be a finite set of line segments in the plane such that
any two segments may cross only at angle $\alpha$. Then $S$ can be partitioned into at most three
subsets of pairwise noncrossing segments. Moreover, if $\frac{\pi}{\alpha}$ is irrational or if
$\frac{\pi}{\alpha} =\frac{p}{q}$, where $\mathrm{gcd}(p,q)=1$ and $p$ is even,
$S$ can be partitioned into at most two subsets of pairwise noncrossing segments.
\end{lemma}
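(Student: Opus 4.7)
The plan is to reduce the lemma to a graph-coloring problem on the set of distinct \emph{directions} of the segments in $S$. I would assign to each segment $s\in S$ its direction $\theta(s)\in[0,\pi)$; by the hypothesis, two segments can cross only if their directions differ by $\alpha$ modulo $\pi$, so in particular any two segments sharing a direction (hence parallel) cannot cross. Thus every proper coloring of the auxiliary \emph{direction graph} $D$ defined below immediately induces a partition of $S$ into pairwise noncrossing subsets: segments whose directions receive the same color cannot cross.

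I would define $D$ by taking as vertices the finitely many distinct directions appearing among segments in $S$, with an edge $\theta\theta'$ whenever $\theta-\theta'\equiv\pm\alpha\pmod{\pi}$. Every vertex has degree at most $2$, since the only potential neighbors of a direction $\theta$ are $\theta+\alpha$ and $\theta-\alpha$ modulo $\pi$, so $D$ is a disjoint union of paths and cycles. Paths are $2$-colorable and any cycle is $3$-colorable, yielding $\chi(D)\le 3$ and hence the general claim at once.

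To sharpen the bound, I would observe that a cycle of length $k$ in $D$ forces $k\alpha\equiv 0\pmod\pi$. If $\pi/\alpha$ is irrational, no such $k$ exists, so $D$ is a forest and $\chi(D)\le 2$. If $\pi/\alpha=p/q$ with $\gcd(p,q)=1$, then $\alpha=q\pi/p$ and the smallest positive $k$ with $k\alpha\in\pi\mathbb{Z}$ is $k=p$, so every cycle in $D$ has length exactly $p$; for $p$ even these cycles are bipartite and two colors suffice, while for $p$ odd a full orbit of length $p$ (realized by including one segment of each direction $\theta_0+j\alpha$, $j=0,\ldots,p-1$) produces an unavoidable odd cycle, explaining why three colors are in general needed.

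The only step I expect to require any care is pinning down the precise notion of "crossing": parallel or collinear pairs in $S$ must be treated as noncrossing, but since the hypothesis only constrains the angle at a genuine transversal crossing and $\alpha>0$, any coincident or overlapping pair simply fails to cross at angle $\alpha$, and the chromatic argument then proceeds without modification.
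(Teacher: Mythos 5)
Your proof is correct and takes essentially the same approach as the paper: the paper itself only cites this lemma from Ackerman et al.\ without reproving it, but its proof of the strengthened Lemma~\ref{lem:segments} uses exactly your auxiliary direction graph (maximum degree two, hence a disjoint union of paths and cycles, with every cycle of length $p$ when $\frac{\pi}{\alpha}=\frac{p}{q}$ in lowest terms, so that irrationality of $\frac{\pi}{\alpha}$ or $2\mid p$ gives bipartiteness and two colors). The only point worth tightening is that your claim that a $k$-cycle forces $k\alpha\equiv 0\pmod{\pi}$ implicitly uses that consecutive steps around a cycle cannot backtrack, which is the same level of detail the paper supplies.
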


The line segments in $S$ may overlap (i.e., intersect in a line segment of positive length). Crossings between line segments is usually defined as follows: Two segments \emph{cross} if they intersect in a single point that lies in the relative interior of both segments. For the application to $\alpha AC_1^=$ graphs, we relax the definition of crossings: Given a set $S$ of line segments and a set $V$ of segment endpoints, two segments in $S$ \emph{cross} if their intersect in a single point that is not in $V$.
The following lemma strengthens Lemma~\ref{lem:segments-old}, and holds under either notion of crossing.
\begin{lemma}
\label{lem:segments}
Let $\alpha\in (0,\frac{\pi}{2}]$ and let $S$ be a finite set of line segments in the plane such that any two segments may cross only at angle $\alpha$. If $\frac{\pi}{\alpha}$ is irrational or if $\frac{\pi}{\alpha} =\frac{p}{q}$, where $\mathrm{gcd}(p,q)=1$ and $2\mid p$, then there exists a subset $S'\subset S$ of pairwise noncrossing segments with $|S'|\geq \frac12\, |S|$. Else $\frac{\pi}{\alpha} =\frac{p}{q}$, where $\mathrm{gcd}(p,q)=1$ and $p=2k+1$ for some $k\in \mathbb{N}$, and then there exists a subset $S'\subset S$ of pairwise noncrossing segments with $|S'|\geq \frac{k}{2k+1}\, |S|$.
\end{lemma}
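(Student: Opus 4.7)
The plan is to refine the direction-coloring argument underlying Lemma~\ref{lem:segments-old}. Group the segments of $S$ by their direction modulo $\pi$: let $D=\{\theta_1,\ldots,\theta_m\}\subset \mathbb{R}/\pi\mathbb{Z}$ be the set of distinct directions and $s_i$ the number of segments with direction $\theta_i$, so that $\sum_i s_i = |S|$. Define the \emph{direction graph} $H$ on vertex set $D$, with $\theta_i\theta_j$ an edge exactly when $\theta_i - \theta_j \equiv \pm \alpha \pmod{\pi}$. I claim that for any independent set $I \subseteq D$ in $H$, the subcollection $S_I \subseteq S$ of segments with direction in $I$ is pairwise noncrossing: two segments of the same direction are parallel or collinear, hence meet at most in a single shared endpoint or in a subsegment (neither of which is a single-point interior crossing under either convention), while two segments of directions $\theta_i \neq \theta_j$ with $\theta_i\theta_j \notin E(H)$ would, if they crossed, cross at an angle different from $\alpha$, contradicting the hypothesis on $S$. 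So finding $S'$ reduces to maximizing $\sum_{\theta_i \in I} s_i$ over independent sets $I$ of $H$.

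Next, I analyze $H$ structurally. Every vertex has degree at most $2$ (its only possible neighbors are $\theta \pm \alpha$), so each component of $H$ is a path or a cycle lying within a single orbit of the rotation $\theta \mapsto \theta + \alpha$ on $\mathbb{R}/\pi\mathbb{Z}$. A cycle of length $\ell$ in $H$ corresponds to a closed walk with $k_+$ steps of $+\alpha$ and $k_- = \ell - k_+$ steps of $-\alpha$ satisfying $(k_+ - k_-)\alpha \equiv 0 \pmod{\pi}$. When $\pi/\alpha$ is irrational this forces $k_+ = k_-$, so $\ell$ is even. When $\pi/\alpha = p/q$ with $\gcd(p,q) = 1$ it forces $p \mid k_+ - k_-$, and since $|k_+ - k_-| \leq \ell$, either $\ell$ is even or $\ell = p$ (a full cycle around the $\alpha$-orbit). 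Hence $H$ contains an odd cycle only when $p = 2k+1$ is odd, and in that case the only odd cycles have length exactly $p$.

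I then pass to the weighted independence number. In the bipartite regime ($\pi/\alpha$ irrational or $2 \mid p$), a proper $2$-coloring of $H$ splits $S$ into two pairwise noncrossing classes, the larger of which has at least $\tfrac12 |S|$ segments. In the remaining case $p = 2k+1$, each path component of $H$ admits an independent set of weight at least $\tfrac12 \geq \tfrac{k}{2k+1}$ of its total weight (the heavier of the two alternating color classes), and each full $p$-cycle component $C$ admits an independent set of weight at least $\tfrac{k}{2k+1}$ of the total weight of $C$ by a shifting argument: cyclically rotate a fixed maximum independent set of $C_{2k+1}$ (of size $k$) through its $2k+1$ positions, note that each vertex is then covered exactly $k$ times, and select the rotation of maximum weight. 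Summing these lower bounds across the components of $H$ produces a noncrossing subset $S' \subseteq S$ with $|S'| \geq \tfrac{k}{2k+1} |S|$.

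The main obstacle I anticipate is verifying that the reduction to $H$ genuinely respects both crossing conventions of the lemma: I must check that segments of noncompatible directions cannot accidentally meet at a non-endpoint interior point under the relaxed convention (which follows from the angle hypothesis, since any such intersection would be a crossing at angle $\neq \alpha$), and that collinear overlaps within a direction class cause no issue (their intersection is never a single non-endpoint point). Once these observations are in place, the structural dichotomy for cycles in $H$ and the averaging argument on $C_{2k+1}$ are essentially routine.
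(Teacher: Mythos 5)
Your proposal is correct and follows essentially the same route as the paper: group segments by direction, form the weighted direction graph whose components are paths and cycles, observe that odd cycles occur only when $p=2k+1$ and then have length exactly $p$, and extract a heavy independent set by two-coloring paths and averaging over the $2k+1$ maximum independent sets of $C_{2k+1}$. The only cosmetic difference is that you re-derive the bipartite case directly from the cycle-parity analysis rather than citing Lemma~\ref{lem:segments-old}, and you are slightly more explicit about why same-direction segments cannot cross under either crossing convention.
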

\begin{proof}
If $\frac{\pi}{\alpha}$ is irrational or if $\frac{\pi}{\alpha} =\frac{p}{q}$, where $\mathrm{gcd}(p,q)=1$ and $2\mid p$, then the claim follows directly from Lemma~\ref{lem:segments-old}.

Assume that $\frac{\pi}{\alpha} =\frac{p}{q}$, where $\mathrm{gcd}(p,q)=1$ and $p=2k+1$ is an odd integer.
The \emph{direction} of a segment $s\in S$, denoted $\mathrm{dir}(s)$, is the minimum counterclockwise angle from the $x$-axis to a line parallel to $s$. Note that $\mathrm{dir}(s)\in [0,\pi)$. Let $D=\{\mathrm{dir}(s): s\in S\}$, that is, the set of directions of the segments in $S$. For each direction $d\in D$, let $S(d)=\{s\in S: \mathrm{dir}(s)=d\}$ be the set of segments of direction $d$.

We define a vertex-weighted graph $G_D = (D, E_D)$, in which two directions $d_1,d_2\in D$ are joined by an edge if and only if they differ by $\alpha$; and the weight of a direction $d\in D$ is the cardinality of $S(d)$.
Clearly, the maximum degree of a vertex in $G_D$ is at most two, and so $G_D$ is the disjoint union of paths and cycles. Furthermore, if $d_1, d_2\in D$ are in the same component of $G_D$, then
they differ by a multiple of $\alpha =\frac{q}{p}\cdot \pi$. Since $\mathrm{gcd}(p,q)=1$,
then $m\cdot \alpha \equiv 0\bmod \pi$ if and only if $m\equiv 0\bmod p$.
Consequently, every cycle in $G_D$ is isomorphic to $C_p=C_{2k+1}$.

Note that if segment $s_1,s_2\in S$ cross, then $\mathrm{dir}(s_1)$ and $\mathrm{dir}(s_2)$ are adjacent in $G_D$.
We can now construct a subset $S'$ of $S$. In each connected component $H$ of $G_D$, we choose a maximum independent set $I(H)$ as follows. (1) If $H$ is a path, then it is 2-colorable, and the weight of one of the color classes is at least half of the weight of $H$; let $I(H)$ be such a color class. (2) If $H$ is isomorphic to $C_{2k+1}$, then it has $2k+1$ maximum independent sets (each containing $k$ vertices), and every vertex lies in precisely $k$ independent sets. By the pigeonhole principle, the weight of an independent set is at least $\frac{k}{2k+1}$ times the weight of $H$; let $I(H)$ be such an independent set. Let $I\subset D$ be the union of independent sets over all components $H$ of $G_D$; and note that $I$ is an independent set in $G_D$, and its weight is at least $\frac{k}{2k+1}$ times the weight of $G_D$, which is $|S|$. Now let $S'$ be the union of the sets $S(d)$ for all $d\in I$. Clearly, the segments in $S'$ are pairwise disjoint and $|S'|\geq \frac{k}{2k+1}\, |S|$, as required.
\end{proof}

\begin{corollary}\label{cor:segments}
Let $\alpha\in (0,\frac{\pi}{2}]$ and let $S$ be a finite set of line segments in the plane such that any two segments may cross only at angle $\alpha$. Then there exists a subset $S'\subset S$ of pairwise noncrossing segments with $|S'|\geq |S|/3$.
\end{corollary}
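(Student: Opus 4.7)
The plan is to derive the corollary directly from Lemma~\ref{lem:segments} by a short case analysis on the arithmetic nature of $\alpha$. Lemma~\ref{lem:segments} already gives a guarantee of the form $|S'|\geq c(\alpha)\cdot |S|$, so I only need to verify that $c(\alpha)\geq 1/3$ uniformly for $\alpha\in(0,\pi/2]$.

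First, suppose either $\pi/\alpha$ is irrational or $\pi/\alpha=p/q$ with $\gcd(p,q)=1$ and $p$ even. In that case Lemma~\ref{lem:segments} yields a pairwise noncrossing $S'\subset S$ with $|S'|\geq \tfrac12|S|\geq \tfrac13|S|$, which is already stronger than needed.

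Otherwise $\pi/\alpha=p/q$ with $\gcd(p,q)=1$ and $p=2k+1$ odd, and Lemma~\ref{lem:segments} gives a noncrossing $S'\subset S$ with $|S'|\geq \tfrac{k}{2k+1}|S|$. I would show that $k\geq 1$ under the hypothesis $\alpha\in(0,\pi/2]$: this forces $\pi/\alpha\geq 2$, so $p/q\geq 2$ and $p\geq 2q\geq 2$; combined with $p$ odd, this gives $p\geq 3$ and hence $k\geq 1$. The inequality $\tfrac{k}{2k+1}\geq \tfrac13$ is equivalent to $3k\geq 2k+1$, which holds precisely when $k\geq 1$, so the desired bound follows.

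There is essentially no obstacle here; the only subtlety is checking the boundary case, which corresponds to $k=1$, i.e.\ $p=3$ and $\alpha=\pi/3$ (up to the factor $q$). At this value the bound $1/3$ is tight in Lemma~\ref{lem:segments}, so the constant $1/3$ in the corollary cannot be improved by this argument alone; any strengthening would require sharpening Lemma~\ref{lem:segments} itself for $C_3$-components of the direction graph $G_D$.
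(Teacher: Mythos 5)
Your proposal is correct and matches the paper's intended derivation: the corollary is stated as an immediate consequence of Lemma~\ref{lem:segments}, and your case analysis (the $\tfrac12$ bound in the irrational/even case, and $k\geq 1$ hence $\tfrac{k}{2k+1}\geq\tfrac13$ in the odd case, using $\pi/\alpha\geq 2$) is exactly the verification needed. Your tightness remark at $\alpha=\pi/3$ is also consistent with the paper's separate treatment of that angle in Theorem~\ref{thm:AC}.
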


\begin{lemma}\label{lem:ac}
Let $\alpha \in (0,\frac{\pi}{2}]$, and let $G = (V, E)$ be a graph on $n\geq 3$ vertices that admits an $\alpha AC^=_1$
drawing such that for every edge $e\in E$, both edge segments cross at least one other edge in $E$. Then
\begin{enumerate}\itemsep 0pt
\item[(i)] $|E| \leq 15n-27$ for all $\alpha\in (0,\frac{\pi}{2}]$;
\item[(ii)] $|E| \leq 12n-18$ for $\alpha\in (0,\frac{\pi}{3})$;
\item[(iii)] $|E| \leq 10n-15$ if $\alpha\in (\frac{\pi}{3},\frac{\pi}{2})$; and
\item[(iv)] $|E| \leq 10n-18$ if $\frac{\pi}{\alpha}$ is irrational or $\frac{\pi}{\alpha} =\frac{p}{q}$, where $\mathrm{gcd}(p,q)=1$ and $2\mid p$.
\end{enumerate}
\end{lemma}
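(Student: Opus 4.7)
The plan is to extend the red graph construction (introduced in Section~\ref{sec:intro} for $\alpha AC_2^=$ drawings) to the $\alpha AC_1^=$ setting, and to combine it with Lemma~\ref{lem:segments} and Theorem~\ref{thm:recursion}.

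First, I orient every edge $e$ of $G$, which distinguishes a first segment $s_1(e)$ and a last segment $s_2(e)$. By the hypothesis of the lemma, every such segment is crossed. For each segment $s$ of each edge $e$, I define a red edge $\gamma(s)$: start at the vertex endpoint of $s$, walk along $s$ to its first crossing point $x$ with some other segment $s'$ of another edge $e'$, and then continue along $s'$ to one of its two vertex endpoints. Because every crossing in the drawing occurs at angle $\alpha$, the red edge $\gamma(s)$ is a one-bend polyline whose interior angle at $x$ equals either $\alpha$ or $\pi-\alpha$, and hence is an $\alpha$-bend or $(\pi-\alpha)$-bend edge. This produces up to $2|E|$ red edges in total.

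Second, I classify the red edges by bend angle into two sets $\Gamma_\alpha$ and $\Gamma_{\pi-\alpha}$. After an inward perturbation of overlapping bend points (in the style of the proof of Proposition~\ref{pro:collinear}), I claim that each of these sets forms a plane multigraph on $V$. Since $\alpha\leq\pi/2\leq 2\pi/3$, Theorem~\ref{thm:recursion} then yields $|\Gamma_\alpha|\leq 4n-6$ unconditionally; and if $\alpha\geq\pi/3$ then $\pi-\alpha\leq 2\pi/3$, so Theorem~\ref{thm:recursion} also yields $|\Gamma_{\pi-\alpha}|\leq 4n-6$. Otherwise ($\alpha<\pi/3$) I fall back on the trivial upper bound $|\Gamma_{\pi-\alpha}|\leq 6n-12$ provided by Proposition~\ref{pro:multiplicity}.

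Third, to sharpen the constants to those claimed in (i)--(iv), I use Lemma~\ref{lem:segments} and Corollary~\ref{cor:segments} to split the set of all $2|E|$ drawing-segments into $c$ pairwise non-crossing classes, where $c=2$ under the ``nice angle'' hypothesis of (iv) and $c=3$ in general. Within each class, the segments are pairwise non-crossing, so the corresponding red edges can be routed consistently without introducing crossings among them, and each class then contributes a plane $\alpha$-bend (resp.\ $(\pi-\alpha)$-bend) multigraph whose size is controlled by Theorem~\ref{thm:recursion}. Summing the per-class bounds over all $c$ classes gives case (i) with its factor $c=3$ and case (iv) with its factor $c=2$; specializing the analysis of $\Gamma_{\pi-\alpha}$ according to whether $\alpha<\pi/3$ (where only the trivial bound $6n-12$ is available) or $\alpha\in(\pi/3,\pi/2)$ (where Theorem~\ref{thm:recursion} applies in both directions) then produces the intermediate bounds (ii) and (iii).

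The main obstacle I anticipate is verifying that $\Gamma_\alpha$ and $\Gamma_{\pi-\alpha}$ really are plane multigraphs. In the $\alpha AC_2^=$ setting the middle segment of each edge acts as a buffer that isolates the two end-segments, which is what forces the red edges to be automatically pairwise non-crossing; in $\alpha AC_1^=$ no such buffer exists, so I expect to rely on a consistent continuation rule at each bend (always turning to the same side of the transverse segment, say) together with the non-crossing segment classes supplied by Lemma~\ref{lem:segments} to arbitrate globally. Carefully checking this routing rule, combined with the $c$-fold overhead from the partition, is where the concrete constants $15$, $12$, and $10$ in the four cases arise.
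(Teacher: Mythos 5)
Your overall skeleton is the paper's: build the red multigraph of one-bend paths from vertices through first crossings, split by bend angle, and control each piece with Theorem~\ref{thm:recursion} and the segment-colouring lemma. But two gaps would stop the argument from closing. First, the claim in your second paragraph that $\Gamma_\alpha$ and $\Gamma_{\pi-\alpha}$ become plane multigraphs after merely perturbing overlaps is false, and your proposed fix (a consistent turning rule at each bend) does not repair it: two red edges can genuinely cross via the portions they inherit from the segments they turn \emph{onto}. The observation that makes the paper's proof work is that the \emph{first} portion of every red edge is crossing-free by construction (it stops at the first crossing), so all crossings among red edges occur between \emph{second} portions; one then applies Lemma~\ref{lem:segments}/Corollary~\ref{cor:segments} to the set of second segments only and keeps a single noncrossing colour class (paying the factor $3$, resp.\ $2.5$ or $2$). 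Your plan to colour all $2|E|$ drawing segments and let ``each class contribute a plane multigraph'' founders on the fact that a red edge uses portions of two segments from possibly different classes; without the first-portion observation there is no class assignment under which each class is crossing-free. You also never address the red edges that trace the same undirected path in opposite directions (the paper's $\Gamma_2$), which cannot both survive as distinct edges of a plane multigraph and must be discounted.

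Second, the constants in (ii) and (iii) do not follow from what you wrote. For $\alpha<\pi/3$ you explicitly fall back on the trivial bound $6n-12$ for the $(\pi-\alpha)$-bend class; with the factor $3$ that yields only $3\bigl((4n-6)+(6n-12)\bigr)=30n-54$ red edges, i.e.\ $|E|\le 15n-27$, which is case (i), not the claimed $12n-18$. And case (iii) requires the sharpened ratio $\tfrac{k}{2k+1}\ge\tfrac25$ from Lemma~\ref{lem:segments} (for $\pi/\alpha=p/q$ with $p=2k+1\ge 5$ odd when $\alpha\in(\pi/3,\pi/2)$), giving an overhead of $2.5$ rather than your $c\in\{2,3\}$; without it $10n-15$ is out of reach. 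So while the architecture matches the paper, the proof as proposed neither establishes planarity of the red multigraphs nor produces the stated constants in two of the four cases.
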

\begin{proof}
\textbf{(i)}
Let $S$ be the set of edge segments of all edges in $E$; hence $|S|=2\, |E|$.
For each segment $s\in S$, create a directed path $\gamma(s)$ as follows: Start from the (unique) vertex in $V$ incident to $s$, then follow $s$ until the first crossing with some other segment $s'$, and then follow $s'$ to the (unique) vertex in $V$ incident to $s'$. Each path $\gamma(s)$ is either an $\alpha$-bend edge or a $(\pi-\alpha)$-bend edge. We make three observations:
(1) For every $s\in S$, the first segment of $\gamma(s)$ is crossing free;
(2) if $s_1\neq s_2$, then $\gamma(s_1)\neq \gamma(s_2)$, since the initial segments of the $\gamma$ paths are distinct;
(3) for $s_1\neq s_2$, the paths $\gamma(s_1)$ and $\gamma(s_2)$ may correspond to the same undirected polygonal path with opposite directions.

Let $\Gamma=\{\gamma(s): s\in S\}$, and let $(V,\Gamma)$ be the \emph{red graph}. Let $\Gamma_1\subset \Gamma$ be the set of red edges where the second segment crosses some other edge in $\Gamma$; and let $\Gamma_2=\Gamma\setminus \Gamma_1$ be the set of red edges where both segments are crossing-free.

Two edges in $\Gamma_1$ cannot follow the same path in opposite directions because the first segment of every red edge is crossing-free. Let $S_1$ be the set of second segments of the red edges in $\Gamma_1$. By Corollary~\ref{cor:segments}, there is a subset $S_1' \subseteq S_1$ of pairwise noncrossing segments of size $|S_1'|\geq \frac13\, |S_1|$. Let $\Gamma_1'$ be the set of edges in $\Gamma_1$ whose second segment lies in $S_1'$, with  $|\Gamma_1'|\geq \frac13\, |\Gamma_1|$.
If $\Gamma_2$ contains two edges that follow the same path in opposite directions, then omit one
arbitrarily, and let $\Gamma_2'\subseteq \Gamma_2$ be the remaining edges, with $|\Gamma_2'|\geq \frac12\, |\Gamma_2|$.

Consider the multigraph $(V,\Gamma_1'\cup \Gamma_2')$, and note that $|\Gamma_1'\cup \Gamma_2'|\geq \frac13\, |\Gamma_1| + \frac12\, |\Gamma_2|\geq \min\{\frac13,\frac12\} (|\Gamma_1|+|\Gamma_2|) =\frac13\, |\Gamma|$. The edges in $\Gamma_1'\cup \Gamma_2'$ are pairwise noncrossing but they may overlap. However, if a segment of $\gamma(s)\in \Gamma_1'\cup \Gamma_2'$ overlaps with another edge in $\Gamma_1'\cup \Gamma_2'$, then the other segment of $\gamma(s)$ is overlap-free. Indeed, the first segments of red edges and the edges in $\Gamma_2'$ are pairwise nonoverlapping; and the second segments of edges in $\Gamma_1'$ are pairwise noncrossing (in the strong sense). Consequently, we can partition $\Gamma_1'\cup \Gamma_2'$ into subsets of pairwise overlapping red edges; and we can perturb the edges in each subset to remove overlaps while maintaining the angles between segments. After perturbation, $(V,\Gamma_1'\cup \Gamma_2')$ is a plane multigraph, composed of $\alpha$-bend and $(\pi-\alpha)$-bend edges.

Let $\Gamma_3\subset \Gamma_1'\cup \Gamma_2'$ be the set of $\alpha$-bend edges; and let $\Gamma_4\subseteq \Gamma_1'\cup \Gamma_2'$ be the set of $(\pi-\alpha)$-bend edges. By Theorem~\ref{thm:recursion},
$|\Gamma_3|\leq 4n-6$. In general, the multiplicity of every edge in $(V,\Gamma_4)$ is at most two by Proposition~\ref{pro:multiplicity}, and so $|\Gamma_4|\leq 2(3n-6)=6n-12$. Hence we have $|\Gamma|\leq 3\, |\Gamma_1'\cup \Gamma_2'|= 3\, |\Gamma_3\cup \Gamma_4| \leq 3\big((4n-6)+(6n-12)\big)=30n-54$.
Overall, this yields $|E|=\frac12\, |\Gamma|\leq 15n-27$.

\paragraph{(ii)}
When $\alpha < \frac{\pi}{3}$, we can apply Theorem~\ref{thm:recursion} for both $\alpha$-bend and $(\pi-\alpha)$-bend multigraphs.
Consequently, $|\Gamma_1'\cup \Gamma_2'|=|\Gamma_3|+|\Gamma_4|\leq 2(4n-6)= 8n-12$.
This yields $|\Gamma|\leq 3\, |\Gamma_1'\cup \Gamma_2'|\leq 3(8n-12)=24n-36$, hence $|E|=\frac12\, |\Gamma|\leq 12n-18$.

\paragraph{(iii)}
When $\alpha\in [\frac{\pi}{3},\frac{\pi}{2})$, then $\pi-\alpha<2\pi/3$, and $|\Gamma_4|\leq 4n-6$ by Theorem~\ref{thm:recursion}.
Consequently, $|\Gamma_1'\cup \Gamma_2'|=|\Gamma_3|+|\Gamma_4|\leq 2(4n-6)= 8n-12$.
If $\frac{\pi}{\alpha}=\frac{p}{q}$ with $\mathrm{gcd}(p,q)=1$ and $q=2k+1$ is an odd integer, then $k\geq 2$ and $|S_1'|\geq \frac25 \, |\Gamma_1|$ by Lemma~\ref{lem:segments}.
This yields $|\Gamma|\leq 2.5\, |\Gamma_1'\cup \Gamma_2'|\leq 2.5(8n-12)=20n-30$, hence $|E|=\frac12\, |\Gamma|\leq 10n-15$.

\paragraph{(iv)}
When $\frac{\pi}{\alpha}$ is irrational or $\frac{\pi}{\alpha}=\frac{p}{q}$ with $\mathrm{gcd}(p,q)=1$ and $2\mid p$,
then the size of $S_1'$ is bounded by $|S_1'|\geq \frac12 \, |\Gamma_1|$ by Lemma~\ref{lem:segments}.
This yields $|\Gamma|\leq 2(10n-18)=20n-36$, hence $|E|=\frac12\, |\Gamma|\leq 10n-18$.
\end{proof}

\begin{theorem}\label{thm:AC}
If $G=(V,E)$ is an $\alpha AC_1^=$ graph with $n\geq 3$ vertices,  then
\begin{enumerate}\itemsep 0pt
\item[(i)] $|E| \leq 21n-36$ for $\alpha=\frac{\pi}{3}$;
\item[(ii)] $|E| \leq 18.5n-34$ for $\alpha\in (0,\frac{\pi}{3})$;
\item[(iii)] $|E| \leq 16.5n-31$ if $\alpha\in (\frac{\pi}{3},\frac{\pi}{2}]$; and
\item[(iv)] $|E| \leq 16n-30$ if $\frac{\pi}{\alpha}$ is irrational or $\frac{\pi}{\alpha} =\frac{p}{q}$, where $\mathrm{gcd}(p,q)=1$ and $2\mid p$.
\end{enumerate}
\end{theorem}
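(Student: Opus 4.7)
The plan is to reduce Theorem~\ref{thm:AC} to Lemma~\ref{lem:ac} by separating out the edges that violate the lemma's hypothesis and bounding them by an independent planar argument. Given an $\alpha AC_1^=$ graph $G=(V,E)$, I would partition $E=E^+\cup E^-$, where $E^+$ consists of edges whose both end-segments cross some other edge of $G$, and $E^-=E\setminus E^+$ consists of edges with at least one crossing-free end-segment. Further, I would note that the ``completely crossing-free'' subset of $E^-$ forms a plane subgraph of $G$, bounded trivially by $3n-6$ via Euler.

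The first substantive step is to apply the red-graph construction from the proof of Lemma~\ref{lem:ac} directly to the full drawing of $G$. Each edge in $E^+$ contributes two red edges $\gamma(s_1),\gamma(s_2)$, and each edge of $E^-$ with exactly one crossing end-segment contributes one, so $|\Gamma|\ge 2|E^+|$. Crucially, the bound $|\Gamma|\le B$ derived in Lemma~\ref{lem:ac}---via the color-reduction in Lemma~\ref{lem:segments} together with the multigraph bound of Theorem~\ref{thm:recursion}---uses only the structure of $\Gamma$ as a union of an $\alpha$-bend and a $(\pi-\alpha)$-bend plane multigraph, not the hypothesis that every original edge contributes two red edges. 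Hence that bound carries over verbatim, giving $|E^+|\le B/2$, where $B/2$ equals $15n-27$, $12n-18$, $10n-15$, and $10n-18$ in cases (i)--(iv), respectively.

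The second step is to bound $|E^-|$. The free end-segments of edges in $E^-$ form a set of pairwise non-crossing straight-line segments attached to $V$ at one endpoint and to a distinct bend point at the other; interpreting each $E^-$ edge as ``attached'' to its free-segment endpoint and contracting each bend toward its incident vertex, one obtains a plane multigraph on $V$ whose multiplicity is at most $2$ by Proposition~\ref{pro:multiplicity}, yielding $|E^-|\le 2(3n-6)=6n-12$. Summing $|E|=|E^+|+|E^-|$ and simplifying then produces the four claimed bounds of the theorem, with minor constant adjustments (such as $6n-9$ in place of $6n-12$) absorbed into the additive constants.

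The main obstacle lies in the tightness of the bound on $|E^-|$ in cases~(ii) and~(iii), where the arithmetic requires $|E^-|\le 6.5n-O(1)$ rather than $6n-O(1)$: a pure planar-multigraph argument gives only $6n-12$, which is too weak for small $n$. Overcoming this requires combining the planar bound with a color-reduction argument on the free end-segments themselves---mirroring the application of Lemma~\ref{lem:segments} inside the proof of Lemma~\ref{lem:ac}---and verifying that contracting bend points to their incident vertices does not introduce spurious crossings or violate the multiplicity-$2$ bound of Proposition~\ref{pro:multiplicity}. Once these technical points are handled, cases~(i) and~(iv) follow from the clean $6n-O(1)$ bound, while cases~(ii) and~(iii) follow from the refined $6.5n-O(1)$ bound obtained via the additional color reduction.
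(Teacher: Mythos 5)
Your overall decomposition matches the paper's: split $E$ into the edges both of whose end-segments cross something (your $E^+$, the paper's $E\setminus E_1$) and those with at least one crossing-free end-segment (your $E^-$, the paper's $E_1$), bound the former via Lemma~\ref{lem:ac}, and bound the latter separately. The treatment of $E^+$ is essentially the paper's. The genuine gap is your bound on $E^-$. The free end-segments of the edges in $E^-$ are indeed pairwise non-crossing, but each has only one endpoint in $V$ (the other endpoint is a bend point), and the \emph{other} end-segment of such an edge may cross arbitrarily many edges. Contracting the bend point toward its incident vertex does not remove those crossings, so the resulting multigraph on $V$ is not plane, Proposition~\ref{pro:multiplicity} does not apply to it, and the claim $|E^-|\le 2(3n-6)=6n-12$ is unjustified. (A warning sign: if it were true, case (i) would give $|E|\le (6n-12)+(12n-18)=18n-30$, strictly better than the theorem.) Likewise, a ``color reduction on the free end-segments'' is vacuous, since those segments cross nothing; any reduction must be applied to the non-free end-segments.

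What the paper actually does for $E_1$ is: take, for each edge of $E_1$, its possibly-crossing end-segment; any two of these cross only at angle $\alpha$, so by Corollary~\ref{cor:segments} at least a $\tfrac13$ fraction are pairwise non-crossing, the corresponding edges form a plane subgraph with at most $3n-6$ edges, and hence $|E_1|\le 9n-18$ in general (and $|E_1|\le 2(3n-6)=6n-12$ in case (iv), where Lemma~\ref{lem:segments} gives a $\tfrac12$ fraction). For cases (ii) and (iii) the paper needs a second ingredient absent from your proposal: when $\alpha\neq\frac{\pi}{3}$, the graph $G_1=(V,E_1)$ is \emph{quasi-planar}, because three pairwise crossing edges of $E_1$ would yield three pairwise crossing segments forming a triangle whose interior angles all lie in $\{\alpha,\pi-\alpha\}$, forcing $\alpha=\frac{\pi}{3}$; the known bound of $6.5n-20$ edges for quasi-planar graphs~\cite{Ackerman2020,AckermanT07} then gives $|E_1|\le 6.5n-16$. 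Without this quasi-planarity observation and the external Ackerman--Tardos bound, the constants in cases (ii) and (iii) are out of reach of your argument.
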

\begin{proof}
Let $\alpha\in (0,\frac{\pi}{2}]$; and let $G = (V, E)$ be a graph with $n\geq4$ vertices with an $\alpha AC^=_1$ drawing.
Let $E_1\subseteq E$ denote the set of edges in $E$ that have at least one crossing-free end segment.
Let $G_1 = (V, E_1)$ and $G_2 = (V, E \setminus E_1)$.
By Corollary~\ref{cor:segments}, there is a subset $S_1'\subseteq S_1$ of pairwise noncrossing segments of size $|S_1'|\geq \frac13\,
|E_1|$. The graph $G_1'$ corresponding to these edges is planar, with at most $3n-6$ edges. Hence $E_1$
contains at most $3 \cdot (3n-6) = 9n-18$ edges.
By Lemma~\ref{lem:ac}(i), $G_2$ has at most $15n-18$ edges.
Hence, $G$ has at most $(9n-18)+(15-18)=24n-36$ edges in general.
However, we can improve on this bound for all $\alpha\in (0,\frac{\pi}{2}]$.

\paragraph{(i)}
When $\alpha=\frac{\pi}{3}$, we can use Lemma~\ref{lem:ac}(ii),
which yields at most $(9n-18)+(12n-18)=21n-36$ edges.

\paragraph{(ii)}
When $\alpha\in (0,\frac{\pi}{2})$ and $\alpha\neq \frac{\pi}{3}$, then $G_1$ is quasi-planar (i.e., it does not contain three pairwise crossing edges). Indeed, suppose to the contrary that three edges in $G_1$ in pairwise cross. As every edge in $G_1$ has two segments, one of which is crossing-free, then these edges contain three segments that pairwise cross. These segments form a triangle, in which every interior angle is $\alpha$ or $\pi-\alpha$. Since interior angles of a triangle sum to $\pi$, it follows that $\alpha=\frac{\pi}{3}$; a contradiction.

It is known~\cite{Ackerman2020,AckermanT07} that a simple quasi-planar $n$-vertex graph has at most $6.5n-20$ for $n\geq 4$, hence at most $6.5n-16$ edges for $n\geq 3$. Combined with Lemma~\ref{lem:ac}(i), $G$ has at most $18.5n-34$ edges.

\paragraph{(iii)}
When $\alpha\in (\frac{\pi}{3},\frac{\pi}{2})$, then again $G_1$ is quasi-planar, with at most $6.5n-16$ edges due to~\cite{Ackerman2020,AckermanT07}. Combined with Lemma~\ref{lem:ac}(iii), $G$ has at most $16.5n-31$ edges.

\paragraph{(iv)}
When $\frac{\pi}{\alpha}$ is irrational or $\frac{\pi}{\alpha} =\frac{p}{q}$, where $\mathrm{gcd}(p,q)=1$ and $2\mid p$.
Then we can assume $|S_1'|\geq \frac12\, |E_1|$ by Lemma~\ref{lem:segments}, hence $|E_1|\leq 2\cdot (3n-6) = 6n-12$.
Combined with Lemma~\ref{lem:ac}(iv), $G$ has at most $16n-30$ edges in this case.
\end{proof}

\section{Conclusions and Open Problems}
\label{sec:con}

We have introduced $\alpha$-bend and $\alpha$-arc graphs and multigraphs for any $\alpha\in (0,\pi)$, and derived upper and lower bounds on the maximum number of edges, $M_a(P,\alpha)$ and $M_b(P,\alpha)$, for a point set $P$ in these drawing styles. However, the computational complexity of the corresponding optimization problems is unknown.

\begin{problem}\label{prob:1}
Is it NP-hard to determine $M_a(P,\alpha)$ (resp., $M_b(P,\alpha)$) for a given point set $P\subset \mathbb{R}^2$ and angle $\alpha\in (0,\pi)$? Is it $\exists \mathbb{R}$-hard to determine $M_b(P,\alpha)$?
\end{problem}

Intuitively, allowing one-bend edges instead of straight-line edges gives extra flexibility. In several cases, we have shown that an $\alpha$-bend graph on a point set $P$ has at least as many edges as a straight-line triangulation on $P$, that is, $M_b(P,\alpha)\geq M(P)$. However, we have also shown that $\alpha$-arc edges become obstacles as $\alpha$ tends to zero, and $M_a(P,\alpha)< M(P)$ for all sufficiently small $\alpha>0$.
For $\alpha$-bend edges in general, this remains an open problem.

\begin{problem}
Does there exist a finite point set $P\subset \mathbb{R}^2$ and an angle $\alpha\in (0,\pi)$ such that $M_b(P,\alpha)<M(P)$ ?
\end{problem}

It is also an open problem to improve the upper or lower bounds for the number of edges in $\alpha AC_1^=$ graphs and $\alpha AC_2^=$ graphs.

 \begin{problem}
Determine the maximum number of edges in an $n$-vertex $\alpha AC_1^=$ graph and an $n$-vertex $\alpha AC_2^=$ graph for all angles $\alpha\in (0,\frac{\pi}{2}]$.
\end{problem}

\bibliographystyle{alphaurl}
\bibliography{angles}

\end{document}